\newtheorem{lemma}{Lemma}
\newtheorem{prop}{Proposition}
\begin{document}

\title{\huge CoMP Enhanced Subcarrier and Power Allocation for Multi-Numerology based 5G-NR Networks}



\author{\IEEEauthorblockN{Li-Hsiang Shen, Chia-Yu Su, and Kai-Ten Feng}\\
\IEEEauthorblockA{Department of Electrical and Computer Engineering \\
National Yang Ming Chiao Tung University, Hsinchu, Taiwan\\
gp3xu4vu6.cm04g@nctu.edu.tw, su0760214.cm07g@nctu.edu.tw and ktfeng@mail.nctu.edu.tw}}

\maketitle

\begin{abstract}
With proliferation of fifth generation (5G) new radio (NR) technology, it is expected to meet the requirement of diverse traffic demands. We have designed a coordinated multi-point (CoMP) enhanced flexible multi-numerology (MN) for 5G-NR networks to improve the network performance in terms of throughput and latency. We have proposed a CoMP enhanced joint subcarrier and power allocation (CESP) scheme which aims at maximizing sum rate under the considerations of transmit power limitation and guaranteed quality-of-service (QoS) including throughput and latency restrictions. By employing difference of two concave functions (D.C.) approximation and abstract Lagrangian duality method, we theoretically transform the original non-convex nonlinear problem into a solvable maximization problem. Moreover, the convergence of our proposed CESP algorithm with D.C. approximation is analytically derived with proofs, and is further validated via numerical results. Simulation results demonstrated that our proposed CESP algorithm outperforms the conventional non-CoMP and single numerology mechanisms along with other existing benchmarks in terms of lower latency and higher throughput under the scenarios of uniform and edge users.
\end{abstract}

\begin{IEEEkeywords}
	Coordinated multipoint (CoMP), 5G new radio (5G-NR), multi-numerology, resource allocation, enhanced mobile broadband (eMBB), ultra-reliable and low-latency communications (URLLC).
\end{IEEEkeywords}

\section{Introduction}

With proliferation of abundant and diverse services in the future wireless networks, the fifth generation new radio (5G-NR) technology is expected to meet the requirement of three service types including enhanced mobile broadband (eMBB), massive machine-type communications (mMTC) and ultra-reliable and low-latency communications (URLLC) \cite{1_5G}, \cite{2_B5G}. They require tens-of-Gbps peak transmission rates, capability of connections of million devices and ultra-low latency at milli-second level, respectively \cite{3_5G}, which are undoubtedlessly challenging for existing communication systems due to its inflexibility and stringent service architecture. To realize dynamic and flexible adaptation to various service requirements, 5G-NR standardized by the 3rd generation partnership project (3GPP) organization has introduced an advanced multi-numerology (MN) under orthogonal frequency division multiplexing (OFDM) techniques, which enables co-existence of different frame structures of numerologies based on the service types \cite{4_3GPP_38211}.

Numerology refers to a set of parameters of subcarrier spacing (ScS) for frequency domain and symbol duration for time domain in an OFDM system. However, to facilitate multiple services, the corresponding configurations of the frame structure should be readjusted. Therefore, multi-numerology-enabled network (MNN) of the envisioned 5G-NR techniques has attracted substantial attentions \cite{5_flex, 6_RAPS, 7_ROFN, 8_latency_MN} thanks to its potential achievement of accommodation for heterogeneous services by properly arranging numerology based on service requirements. Generally, the set of numerology with longer symbol duration can be configured for delay-tolerant transmissions, whilst shorter symbol duration is more feasible for urgent service such as URLLC in order to meet rigorous latency requirement. To control overhead in MNN, the authors of \cite{5_flex} have proposed a heuristic method to obtain efficient configuration of numerologies under a feasible number of framing length. Under a fixed multi-numerology structure, paper \cite{6_RAPS} has designed a packet scheduling algorithm with the considerations of fairness among user equipments (UEs). However, it potentially provokes inflexibility to readjust transmission frame structure to simultaneously fulfill multiple service requirements. Nevertheless, in paper \cite{7_ROFN}, they have designed an enhanced scheduling scheme for flexible numerologies in both frequency and time domains where the original non-solvable problem is decoupled and approximated as a convex problem. In \cite{8_latency_MN}, the authors have proposed a maximum-minimum Knapsack-based resource allocation problem for multi-user transmissions constrained by latency requirement under a mixed numerology. In our previous work of \cite{previouswork}, we have designed to optimize power allocation for rate maximization under a fixed configured MNN, which reflects the substantial influence of inter-numerology interference (INI). However, the co-existence of multi-numerology with different ScS will eradicate orthogonality property imposing severe INI \cite{9_INI_2, 10_main_INI, previouswork}, which is not considered in the open literatures of \cite{5_flex, 6_RAPS, 7_ROFN, 8_latency_MN}. 


The coordinated multi-point (CoMP) transmission is a promising technique to improve transmission rate performance under a 5G-NR based MNN interfered by INI. By employing CoMP transmission, users will be associated with multiple base stations (BSs) and are capable of mitigating the interferences by coordinated transmissions among BSs. In \cite{13_CoMP_JT}, the authors have considered an user-centric CoMP-based joint transmission (JT) clustering problem by obtaining an optimal parameter of power level difference which determines a set of BSs participating CoMP transmission. The paper \cite{14_CoMP_Availability} has conducted the availability analysis for a single service type based on the CoMP technique by formulating an optimization problem for finding the optimal resource allocation policy. In \cite{15_CoMP_GCoMP}, the authors have derived the system throughput outage probability with various clustering mechanisms for a single-numerology CoMP-enabled non-orthogonal multiple access network. In \cite{16_DPA_CoMP} and \cite{17_RA_CoMP}, they consider a power allocation problem under single-numerology CoMP transmissions by designing energy efficient BS on-off schemes. In paper \cite{18_main_ref}, the authors have conceived and evaluated an advanced power-domain non-orthogonal transmission and dual connectivity scheme for a fixed-framing CoMP-enabled network. The papers of \cite{13_CoMP_JT, 14_CoMP_Availability, 15_CoMP_GCoMP, 16_DPA_CoMP, 17_RA_CoMP, 18_main_ref} consider the fixed transmission frame structure of single-numerology CoMP, which leads to inflexibility and unavailability of other services. However, to support diverse multi-services, CoMP-enabled multi-numerology should be considered by coordinating radio resources among transmit BSs. Therefore, in order to improve system spectrum efficiency, resource allocation of power allocation (PA) and subcarrier assignment (SA) is a substantial problem for CoMP-enhanced multi-numerology transmissions. In \cite{21_RA_OSA}, the authors only consider SA problem by proposing a heuristic optimization method considering inter-cell interference. The authors in \cite{19_RA_J_SAPA} have solved a joint SA and PA problem considering inter-cell interference. Moreover, the authors in \cite{20_RA_J_O_BS, 18_main_ref, 22_Alg_SA} have designed an optimization problem aiming to maximize the energy efficiency by jointly solving the user association in addition to SA and PA with the consideration of inter-cell interference. However, the INI occurrence in multiple services for a CoMP-enabled MNN is not considered in papers of \cite{21_RA_OSA, 19_RA_J_SAPA, 20_RA_J_O_BS, 18_main_ref, 22_Alg_SA}, which potentially deteriorates system performance for supporting multi-services.


\begin{table}
	\centering
	\small
	\caption {Comparison Table of Literatures}
	\begin{tabular}{l|cccccccccc}
		\hline
		\small		
		Literatures &\cite{5_flex, 6_RAPS, 7_ROFN, 8_latency_MN} & \cite{previouswork} & \cite{13_CoMP_JT,17_RA_CoMP} & \cite{15_CoMP_GCoMP,16_DPA_CoMP} & \cite{14_CoMP_Availability},\cite{18_main_ref} & \cite{21_RA_OSA} & \cite{19_RA_J_SAPA, 20_RA_J_O_BS, 22_Alg_SA} & Proposed \\ \hline \hline
		\small
		Multi-Numerology & \checkmark & \checkmark &  &  & 	& 	&   & \checkmark \\
		CoMP & 	& & \checkmark	& \checkmark	& \checkmark	& 	&   & \checkmark \\		
		Power Allocation &	& \checkmark &	& \checkmark& \checkmark	& & \checkmark	& \checkmark \\
		Subcarrier Assignment & &	& & & \checkmark & \checkmark & \checkmark & \checkmark \\
		User Association  &	& & & & \checkmark & \checkmark & \checkmark & \checkmark \\ 
		Throughput   &	& \checkmark & & & \checkmark & \checkmark & \checkmark & \checkmark \\		
		Latency &	& & & & \checkmark & \checkmark & \checkmark & \checkmark \\
		Quality-of-Service & &	& & & \checkmark & \checkmark & \checkmark & \checkmark \\
		
		\hline
	\end{tabular} \label{RefComparison}
\end{table}


A comparison of the related work is summarized in Table \ref{RefComparison}. However, we notice that none of the aforementioned works jointly consider resource allocation for CoMP enhanced MN system. As mentioned in the previous paragraph, with the flexible ScS assignment in multi-numerology enabled 5G NR networks, we can assign larger ScS with shorter duration to satisfy latency-aware service, which provides higher flexibility of both frequency and time resources. Additionally, CoMP is capable of mitigating the inter-cell interference (ICI) \cite{23_CoMP_urllc} by transforming interference into desired signals improving system throughput. Moreover, higher system throughput and lower latency can be achieved by applying CoMP in MNN, which is not considered in recent studies and our previous work in \cite{previouswork}. In others words, CoMP enhanced MNN can potentially supports simultaneous eMBB and URLLC applications, which strikes a compelling tradeoff between throughput and latency requirements. Therefore, inspired by the aforementioned literatures, we have conceived the first architecture of CoMP enhanced MN for 5G NR networks. We jointly consider INI and ICI respectively coming from MN and CoMP transmissions and design subcarrier and power allocation problem for a CoMP-enhanced MNN. The main contributions of this paper are summarized in the followings.
\begin{itemize}
	\item We have firstly conceived a CoMP enhanced multi-numerology 5G NR network which support simultaneous eMBB and URLLC services. With flexible ScS structure of MN, larger ScS with shorter time can be selected to meet the stringent latency requirement and vice versa. Additionally, throughput performance can be improved by employing CoMP techniques which mitigate ICI from multiple BSs. We design a subcarrier and power allocation problem which aims at maximizing system sum rate guaranteed by maximum allowable transmit power and user's quality-of-service (QoS) constraints including throughput and latency.
	
	\item We proposed a CoMP enhanced subcarrier and power allocation (CESP) algorithm for a 5G NR MNN. In CESP algorithm, we transform the original problem into two sub-problems, i.e., subcarrier assignment and power allocation. Both sub-problems employ the difference of two concave function (D.C.) approximation converting the non-convex and nonlinear problem into a solvable convex one. Additionally, to deal with binary variables of SA, we adopt abstract Lagrangian duality method by introducing a new constraint as a penalty term to the objective function. Moreover, the convergence of each sub-problem employing D.C. approximation is proven. It is also theoretically proved that the proposed CESP algorithm can iteratively improve  system performance until convergence.
	
	\item We have evaluated system performance of our proposed CESP for CoMP enhanced MNN. It can be observed that CoMP with MN achieves higher sum data rate and lower delay-throughput outage compared to the conventional non-CoMP single numerology scenario for uniform and edge users under different service requirements. Benefited from CoMP enhanced MNN, the proposed CESP algorithm can achieve higher throughput performance guaranteed by QoS and latency, which effectively supports simultaneous eMBB and URLLC services.
\end{itemize}

The remainder of this paper is organized as follows. In Section \ref{SYS_MOD}, we introduce our CoMP enhanced MN system model consisting of multi-BSs and multi-users. We also conceive a subcarrier and power allocation problem aiming at maximizing system sum rate. The proposed CESP algorithm is illustrated in Section \ref{Alg} including sub-schemes of PA and SA, respectively. Furthermore, the convergence analysis of proposed CESP algorithm is demonstrated in Section \ref{CON_ANA}. Simulation results are provided in Section \ref{PER_EVA}, whilst conclusions are drawn in Section \ref{SEC_CON}.

\section{System Model and Problem Formulation} \label{SYS_MOD}

\subsection{System Model} 

\begin{figure}
	\centering
	\includegraphics[width=4in]{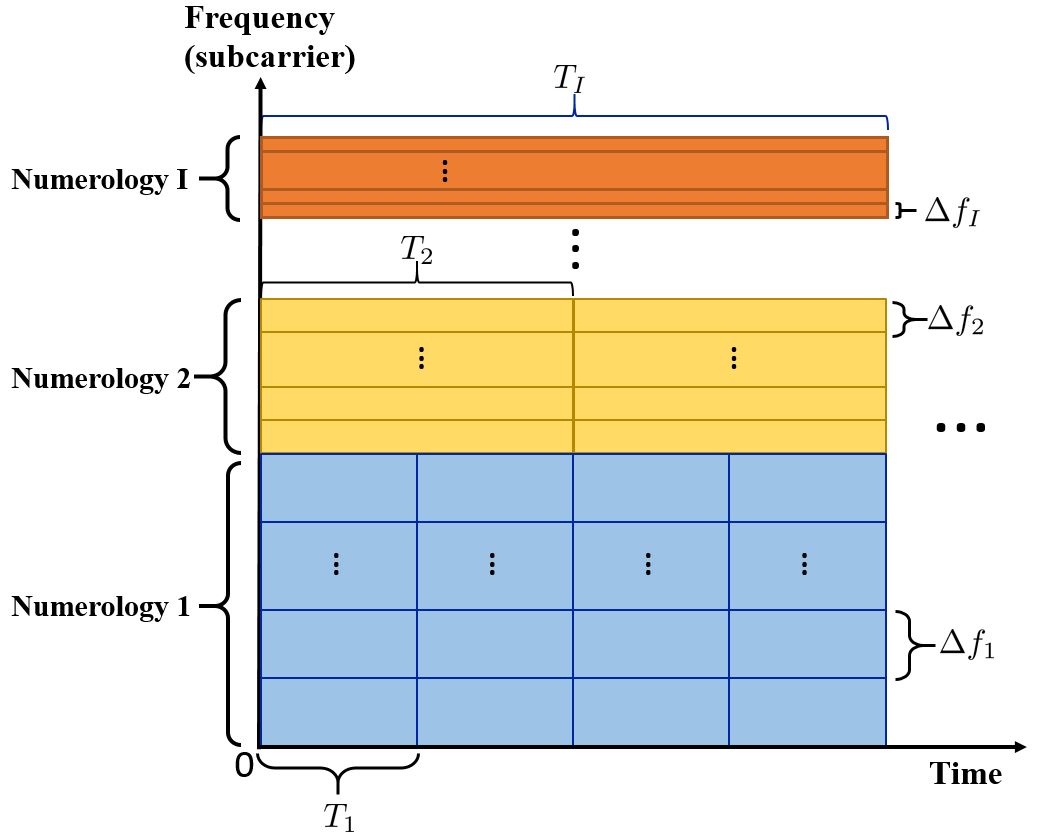}
	\caption{The illustration of MN frame structure.}
	\label{Fig.MN}
\end{figure}

We consider a 5G NR multi-numerology system consisting of a set of BSs $\mathcal{K}=\{1, ..., K\}$ and users $\mathcal{M}=\{1, ..., M\}$ served by numerology set of $\mathcal{I}=\{1, ..., I\}$. The illustration of MN frame structure is depicted in Fig. \ref{Fig.MN}, where it contains subcarrier set of $\mathcal{N}=\{1, ..., N\}$ with each ScS denoted as $\Delta f_{i}$ and symbol duration defined as $T_{i}$. It is noted that the multiplication of ScS and symbol duration is a constant implying that a larger ScS of $\Delta f_{i}$ is accompanied with a shorter symbol duration $T_{i}$. Based on 5G NR specification of \cite{4_3GPP_38211}, numerology $i$ has a subcarrier spacing of $\Delta f_{i} = 2^{\mu_{i}} \cdot 15 \text{ kHz}$, where $\mu_{i} \in \{0,1,...,5\}$. In this paper, we consider two numerologies in our system which is the basic case in a MN system, i.e., $\mathcal{I}=\{1,2\}$. Note that the proposed two-numerology model can be readily extend to a general MN architecture by selecting different $\mu_{i}$. Moreover, for analysis simplicity, we assume that $\Delta f_{2}>\Delta f_{1}$. We define the ratio between two numerologies is $Q=\frac{\Delta f_{2}}{\Delta f_{1}}$. We consider $N_i$ as the number of available subcarriers for numerology $i$ and total subcarrier of difference numerologies is $\sum_{i \in \mathcal{I}} N_{i} = N$. The architecture of proposed CoMP enhanced multi-numerology downlink network is illustrated in Fig. \ref{Fig.SM}. We consider $\boldsymbol{P} =\{ p_{k,m,n}^{i} \}$ as the transmit power set, where $p_{k,m,n}^{i}$ denotes the serving power of BS $k$ to user $m$ on numerology $i$'s subcarrier $n$. We defined $h_{k,m,n}^{i}$ as the channel condition measured from BS $k$ to user $m$ on numerology $i$'s subcarrier $n$, where $h_{k,m,n}^{i}=g_{k,m} \beta_{k,m,n}^{i}$. Note that $g_{k,m}$ represents the distance-based large scale fading, whereas $\beta_{k,m,n}^{i}$ indicates small scale fading effect. Furthermore, 
we denote a resource assignment set as $\boldsymbol{X} = \{ x_{k,m,n}^{i} \in \{0, 1\} \}$ indicating subcarrier assignment, i.e., $x_{k,m,n}^{i}=1$ if numerology $i$'s subcarrier $n$ of BS $k$ is assigned to user $m$ and vice versa.

\begin{figure}
	\centering
	\includegraphics[width=4.5in]{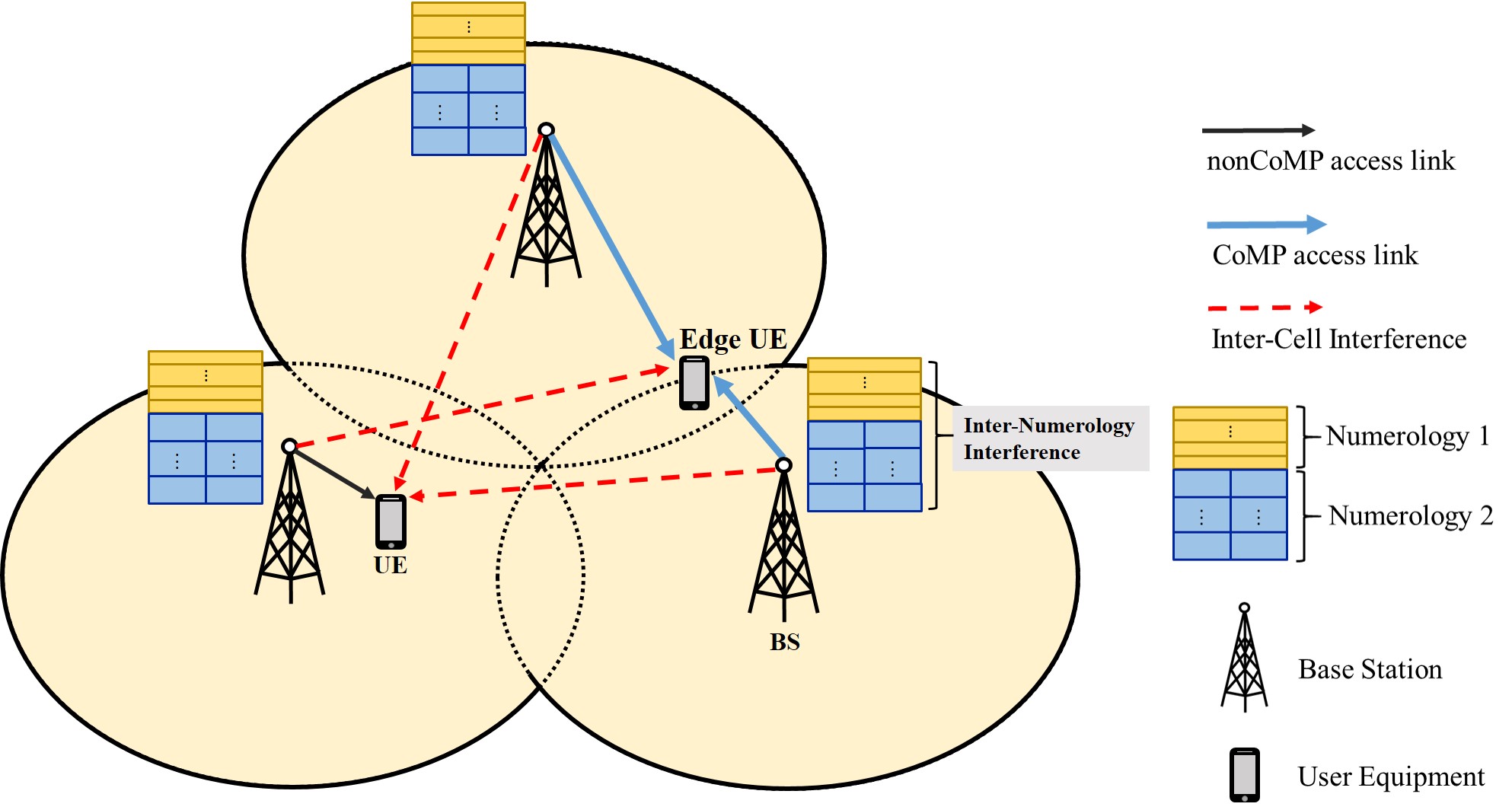}
	\caption{The architecture of proposed CoMP enhanced multi-numerology downlink network.}
	\label{Fig.SM}
\end{figure}

\subsubsection{Signal Model}
By employing CoMP techniques, joint transmission is performed such that the user is capable of receiving superposed signals from more than one BS. Therefore, the signal-to-interference-plus noise ratio (SINR) of user $m$ served by BS $k$ on numerology $i$'s subcarrier $n$ can be formulated as
\begin{equation}\label{eq:SINR}
\Gamma_{k,m,n}^{i} = \frac{\sum_{k' \in \mathcal{K}} p_{k',m,n}^{i} x_{k',m,n}^{i} |h_{k',m,n}^{i}|^2}{\sum_{j\in \mathcal{K} \setminus k}\left( x_{j,m,n}^{i}|h_{j,m,n}^{i}|^2 \cdot INI_{k,m,n}^{i} \right) + ICI_{k,m,n}^{i} + (\sigma_{k,m,n}^{i})^2 },
\end{equation}
where $INI_{k,m,n}^{i}$ represents the inter-numerology interference induced by non-orthogonality of MN architecture, which is given by \cite{10_main_INI}
\begin{equation}\label{eq:INI}
INI_{k,m,n}^{i}=
\begin{cases}
	\sum_{v=1}^{N_2}\frac{p^{2}_{k,m,v}}{N_1^2} \left| \frac{\sin\left[\frac{\pi}{Q}\left(Q\left(v-1\right)-\left(n-1\right)\right)\right]}{\sin\left[\frac{\pi}{2N_1}\left(Q\left(v-1\right)-\left(n-1\right)+N_1\right)\right]} \right|^2, \text{if } i=1, n=1,...,N_1, \\
	\sum_{v=1}^{N_1}\frac{p^{1}_{k,m,v}}{2 N_1^2} \left| \frac{\sin\left[\frac{\pi}{Q}\left(\left(v-1\right)-Q\left(n-1\right)\right)\right]}{\sin\left[\frac{\pi}{2N_1}\left(\left(v-1\right)-Q\left(n-1\right)-N_1\right)\right]} \right|^2, \text{if } i=2, n=1,...,N_2.
\end{cases}
\end{equation} 
Based on the consideration of $\Delta f_2>\Delta f_1$, we can further obtain 
$INI^{i}_{k,m,n}=INI^{i}_{k,m,\hat{n}}$, where $\hat{n} =\left( n \bmod  N_2\right)$ for $i=2, \forall n=(q-1)N_2+1, ..., qN_2, \forall q=2, ..., Q$. Moreover, $ICI^{i}_{k,m,n}$ in $\eqref{eq:INI}$ indicates the inter-cell interference coming from the other BSs using the same numerology, which is expressed as
\begin{equation}\label{eq:ICI}
ICI_{k,m,n}^{i} = \sum_{k'\in \mathcal{K} \setminus k}\sum_{m'\in \mathcal{M} \setminus m} p_{k',m',n}^{i} x_{k',m',n}^{i} |h_{k',m',n}^{i}|^2.
\end{equation}
We further notice that INI comes from different numerologies, i.e., the subcarrier assigned to the user $m$ with different numerologies will induce INI, whilst ICI is led by the same subcarrier allocated to different users. According to the SINR model in $\eqref{eq:SINR}$, we can acquire individual data rate of user $m$ which is served by BS $k$ on numerology $i$'s subcarrier $n$ as
\begin{equation}\label{eq:r}
r^{i}_{k,m,n} = \log_2\left(1+\Gamma^{i}_{k,m,n}\right) ,
\end{equation} 
and the total achievable throughput is then given by
\begin{equation}\label{eq:SR}
\Upsilon = \sum_{i\in \mathcal{I}} \sum_{k\in \mathcal{K}}\sum_{m\in \mathcal{M}}\sum_{n\in\mathcal{N}} r^{i}_{k,m,n} .
\end{equation} 

\subsubsection{CoMP User Association}

We adopt channel conditions between serving BS and users to determine whether users are served by CoMP transmissions. We firstly sort the obtained channel state information $\left|h^{i}_{k,m,n}\right|^{2}$. We define $\kappa_{m}(j)$ as the $j$-th strongest channel index for user $m$, therefore the sorting channel for user $m$ can be written as $\left| h^{i}_{\kappa_{m}(j),m,n} \right|^{2}$, which implies that $\left| h^{i}_{\kappa_{m}(1),m,n} \right|^{2} \geq \left| h^{i}_{\kappa_{m}(2),m,n} \right|^{2} \geq ... \geq \left| h^{i}_{\kappa_{m}(K),m,n} \right|^{2}$. We define a channel level difference threshold of $\sigma_{m}$ to determine the BSs serving user $m$ \cite{13_CoMP_JT}, which is given by
\begin{equation} \label{eq:comp}
x^{i}_{\kappa_{m}(j),m,n} = 
\begin{cases}
1, \text{if } x^{i}_{\kappa_{m}(1),m,n} = 1, \frac{\left| h^{i}_{\kappa_{m}(1),m,n} \right|^{2}}{\left| h^{i}_{\kappa_{m}(k),m,n} \right|^{2}} \leq \sigma_{m}, \forall k \in \mathcal{K}\setminus \kappa_{m}(1), \\
0 , \text{otherwise}.
\end{cases}
\end{equation}
The CoMP condition in $\eqref{eq:comp}$ indicates that we transform the potential strong interference to serving BSs in order to improve signal quality. Moreover, we only decide CoMP transmission links between BSs and users, whilst the resource allocation indicator of numerology types and subcarrier is not determined, which should be designed considering different service constraints in the following subsection.

\subsubsection{Latency Model}

In an MNN as explained previously, it is envisioned that the numerology with larger ScS providing shorter symbol duration is more appropriate for services with stringent latency \cite{8_latency_MN}. Therefore, we define the average latency of user $m$ as 
\begin{equation}\label{latency_a}
l_{m}=\frac{\sum_{i\in \mathcal{I}} \sum_{k\in \mathcal{K}}\sum_{n\in\mathcal{N}} \left( x_{k,m,n}^{i} \cdot l^{i}\right)}{\sum_{i\in \mathcal{I}} \sum_{k\in \mathcal{K}}\sum_{n\in\mathcal{N}} x_{k,m,n}^{i}},
\end{equation}
where $l^i$ indicates latency of numerology $i$. Accordingly, users with stringent latency requirement will be preferably served by assigning the numerology with shorter timeslot, and vice versa. However, for a CoMP enhanced MNN, it becomes a compellingly challenging to obtain optimum assignment for multi-BSs serving multi-users, which should be designed considering different service constraints in the following subsection.

\subsection{Problem Formulation}
Based on the designed signal model for the CoMP enhanced MNN, we aim at maximizing the total system sum rate by assigning power and subcarriers considering interferences of ICI and INI. Moreover, the optimization problem is constrained by allowable maximum transmit power and user QoS including throughput and latency, which can be formulated as 
\begin{subequations} \label{P0}
	\begin{align}
	\max_{\boldsymbol{X}, \boldsymbol{P}} \ & \Upsilon \label{eq:obj_func}\\ 
	\text{s.t.} \ & x^{i}_{k,m,n}\in \{0,1\}, \qquad \forall i\in \mathcal{I},k\in \mathcal{K},m\in \mathcal{M},n\in \mathcal{N}, \label{C1}\\
	& \sum_{\forall m \in \mathcal{M}} x^{i}_{k,m,n} = \{0, 1\}, \quad \forall i\in \mathcal{I},k\in \mathcal{K},n\in \mathcal{N}, \label{C2}\\ 
	& 0\leq p^{i}_{k,m,n}\leq p^{max}, \quad \forall i\in \mathcal{I},k\in \mathcal{K},m\in \mathcal{M},n\in \mathcal{N}, \label{C3}\\ 
	& \sum_{\forall i \in \mathcal{I}} \sum_{\forall m \in \mathcal{M}}\sum_{\forall n \in \mathcal{N}} x^{i}_{k,m,n} p^{i}_{k,m,n} \leq P^{max}_{k}, \quad \forall k\in \mathcal{K}, \label{C4}\\ 
	& \sum_{\forall i \in \mathcal{I}} \sum_{\forall k \in \mathcal{K}} \sum_{\forall n \in \mathcal{N}} r_{k,m,n}^{i} \geq r_{m}^{req}, \quad \forall m\in \mathcal{M}, \label{C5}\\ 
	& l_{m} \leq l_{m}^{req}, \quad \forall m\in \mathcal{M}. \label{C6} 
	\end{align}
\end{subequations}
Note that the arguments of $\boldsymbol X$ and $\boldsymbol P$ in $\eqref{eq:obj_func}$ indicate subcarrier assignment and power allocation, respectively. Constraint $\eqref{C1}$ indicates resource assignment indicator should be a binary variable, and constraint $\eqref{C2}$ confines that each subcarrier is only assigned to one user. In $\eqref{C3}$, the transmit power of BS $k$ serving user $m$ on each numerology $i$'s subcarrier $n$ is constrained by $p^{max}$, and constraint $\eqref{C4}$ demonstrates the maximum allowable transmit power of BS $k$ which is denoted as $P_{k}^{max}$. Constraint $\eqref{C5}$ indicates that each user $m$ needs to satisfy the minimum required data rate of $r^{req}_{m}$, whereas constraint of $\eqref{C6}$ denotes the latency requirement of each user $m$. We can observe that the optimization problem $\eqref{P0}$ is non-convex and nonlinear due to the resource assignment indicator functions and power of signal, ICI and INI terms in $\eqref{eq:SINR}$, which leads to the difficulties to directly obtain the theoretical optimal solution. Therefore, in the following section, we transform the original problem into a theoretically solvable one and propose a CoMP enhanced subcarrier and power allocation scheme to achieve optimum outcomes.

\section{Proposed CoMP Enhanced Subcarrier and Power Allocation (CESP) Algorithm} \label{Alg}

The original optimization problem of $\eqref{P0}$ provokes a difficulty to acquire the optimum due to its non-convexity and nonlinearity. Therefore, we propose a CESP algorithm which iteratively resolves power allocation and subcarrier assignment in separate sub-problems until convergence is achieved \cite{24_iter_alg}. We can then obtain the candidate optimum solution through iterations which is described as 
\begin{equation} \label{eq:iter}
\cdots \rightarrow \overbrace{\underbrace{\mathcal{P}^{t}(\boldsymbol X^{t-1}, \boldsymbol P)}_{\text{SA of CESP}} \rightarrow \underbrace{\mathcal{P}^{t}(\boldsymbol X, \boldsymbol P^{t})}_{\text{PA of CESP}}}^{\text{Iteration } t} \rightarrow \cdots
\end{equation}
In $\eqref{eq:iter}$, $\mathcal{P}^{t}(\boldsymbol X, \boldsymbol P)$ is our objective problem at the $t$-th iteration including variables of subcarrier and power to be optimized. During iteration $t$, we can obtain the candidate solution of power allocation in PA sub-problem with fixed subcarrier at previous iteration, which is expressed as $\mathcal{P}^{t}(\boldsymbol X^{t-1}, \boldsymbol P)$. On the other hand, subcarrier assignment is determined in the SA sub-problem of $\mathcal{P}^{t}(\boldsymbol X^{t}, \boldsymbol P)$ given fixed power allocation results. Therefore, based on our original problem of $\eqref{P0}$, the PA sub-problem $\mathcal{P}^{t}(\boldsymbol X, \boldsymbol P^{t-1})$ is given by
	\begin{subequations} \label{PA}
		\begin{align}
		\max_{\boldsymbol P} \ & \Upsilon \label{eq:obj_func_PA}\\ 
		\text{s.t.} \ & \eqref{C3}, \eqref{C4}, \eqref{C5},
		\end{align}
	\end{subequations}
whilst the SA sub-problem $\mathcal{P}^{t}(\boldsymbol X^{t}, \boldsymbol P)$ is expressed as
	\begin{subequations} \label{SA}
		\begin{align}
		\max_{\boldsymbol X} \ & \Upsilon \label{eq:obj_func_SA}\\ 
		\text{s.t.} \ & \eqref{C1}, \eqref{C2}, \eqref{C4}, \eqref{C5}, \eqref{C6}.
		\end{align}
	\end{subequations}
However, we can observe from both sub-problems that they still possess nonlinearity and non-convexity properties, which should be resolved in order to obtain the optimum resource assignments. The PA and SA algorithms of our proposed CESP are designed in the following subsections.

\subsection{Power Allocation for Proposed CESP Algorithm}\label{PA_sec}
We firstly consider PA sub-problem $\mathcal{P}^{t}(\boldsymbol X^{t-1}, \boldsymbol P)$ in $\eqref{PA}$ with fixed subcarrier outcome $\boldsymbol{X}^{t-1}$ in previous iteration. Due to the non-convexity of the objective function and constraint $\eqref{C5}$, we employ D.C. approximation method \cite{20_RA_J_O_BS} to transform the optimization problem $\eqref{PA}$ into a solvable concave one. Therefore, the objective function of PA can be equivalently rewritten in a D.C. form as  
\begin{equation} \label{equTG}
\Upsilon = T(\boldsymbol P) - G(\boldsymbol P),
\end{equation}
where $T(\boldsymbol P)$ and $G(\boldsymbol P)$ are two concave functions respectively given by 
\begin{equation}\label{T}
T(\boldsymbol P)= \sum_{i\in \mathcal{I}} \sum_{k\in \mathcal{K}}\sum_{m\in \mathcal{M}}\sum_{n\in\mathcal{N}} \log_2\left( {\psi}_{k,m,n}^{i,(1)}\right),
\end{equation}
and
\begin{equation}\label{G}
G(\boldsymbol P)=\sum_{i\in \mathcal{I}} \sum_{k\in \mathcal{K}}\sum_{m\in \mathcal{M}}\sum_{n\in\mathcal{N}} \log_2\left( {\psi}_{k,m,n}^{i,(2)}\right),
\end{equation}
with
\begin{align}
	{\psi}_{k,m,n}^{i,(1)} & = \sum_{k' \in \mathcal{K}} p^{i}_{k',m,n} x^{i}_{k',m,n}|h^{i}_{k',m,n}|^2 + \sum_{j \in \mathcal{K} \setminus k}\left( x^{i}_{j,m,n}|h^{i}_{j,m,n}|^2 \cdot INI^{i}_{k,m,n} \right) + ICI^{i}_{k,m,n} + (\sigma^{i}_{k,m,n})^2, \label{psi_1}\\
	 {\psi}_{k,m,n}^{i,(2)} &= \sum_{j\in \mathcal{K} \setminus k}\left( x^{i}_{j,m,n}|h^{i}_{j,m,n}|^2 \cdot INI^{i}_{k,m,n} \right) + ICI^{i}_{k,m,n} + (\sigma^{i}_{k,m,n})^2.\label{psi_2}
\end{align}
Since the subtraction of two logarithmic functions in $\eqref{equTG}$ is not definitely concave, we perform approximation for the second term of $G(\boldsymbol P)$ to a linear function related to previous updated power solutions by adopting the first-order Taylor approximation \cite{18_main_ref}, which is represented as
\begin{equation}
	\widehat{G}(\boldsymbol P) \approx G(\boldsymbol P^{t-1})+ \nabla^{\dagger} G(\boldsymbol P^{t-1})\cdot(\boldsymbol P-\boldsymbol P^{t-1}),
\end{equation}
where $\nabla^{\dagger} G(\boldsymbol P)$ is a transposed first-order derivative vector of $G(\boldsymbol P)$ with length $2K\cdot N$ and $\dagger$ is the transpose operation. Note that $\boldsymbol P^{t-1}$ is the power allocation outcome of the previous iteration. The corresponding matrix element of $\nabla G(\boldsymbol P)$ is given by
\begin{equation}
\begin{aligned}
\frac{\partial G(\boldsymbol P)}{\partial p_{k,m,n}^{i}} &= \sum_{i'\in \mathcal{I} \setminus i} \sum_{m'\in \mathcal{M}}{\sum_{n'\in \mathcal{N}}{\frac{C^{i'}_{n'} x_{k,m',n'}^{i'}|h_{k,m',n'}^{i'}|^2 }{{\psi}_{k,m',n'}^{i',(2)}}}} \\
	& \!+\! \sum_{i'\in \mathcal{I} \setminus i} \sum_{k' \in \mathcal{K} \setminus k}{\sum_{m' \in \mathcal{M} \setminus m}}{\left(\frac{\sum_{m'' \in \mathcal{M} \setminus m'} x_{k,m'',n}^{i} |h_{k,m',n}^{i}|^2}{{\psi}_{k',m',n}^{i,(2)}} \!+\! \sum_{n'\in \mathcal{N}}{\frac{C^{i'}_{n'} x_{k,m',n'}^{i'} |h_{k,m',n'}^{i'}|^2 }{{\psi}_{k,m',n'}^{i',(2)}}}\right)},
\end{aligned}
\end{equation}
where ICI related term is further derived as
\begin{equation}
C^{i'}_{n'} = 
\begin{cases}
	\frac{1}{N_1^2}\left|\frac{\sin\left[\frac{\pi}{Q}(Q( (n \bmod N_{2})-1)-(n'-1))\right]}{\sin\left[\frac{\pi}{2N_1}(Q( (n \bmod N_{2}) - 1) - (n'-1)+N_1)\right]}\right|^2, \text{if } i'=1, \\
	\frac{1}{2 N_1^2}\left|\frac{\sin\left[\frac{\pi}{Q}((n-1)-Q( (n' \bmod N_{2})-1))\right]}{\sin\left[\frac{\pi}{2N_1}((n-1)-Q( (n' \bmod N_{2}) - 1 ) -N_1) \right]}\right|^2, \text{if } i'=2.
\end{cases}
\end{equation}
Similarly, the minimum rate requirement constraint in $\eqref{C5}$ can be transformed to a concave function by exploiting D.C. approximation. By Taylor approximation, we can have $\widehat{G}_{2}(\boldsymbol P) \approx G_2(\boldsymbol P^{t-1})+ \nabla^{\dagger} G_2(\boldsymbol P^{t-1})\cdot(\boldsymbol P-\boldsymbol P^{t-1})
$, where the previous results of $G_2(\boldsymbol P^{t-1})$ is generally obtained from $G_2(\boldsymbol P) = \sum_{\forall i \in \mathcal{I}} \sum_{\forall k \in \mathcal{K}} \sum_{\forall n \in \mathcal{N}} \log_2\left({\psi}_{k,m,n}^{i,(2)}\right) $ and the gradient of $\nabla^{\dagger} G_2(\boldsymbol P)$ is derived as
\begin{equation}
\begin{aligned}
\frac{\partial G_2(\boldsymbol P)}{\partial p_{k,m,n}^{i}}=
\begin{cases}
	\sum_{i'\in \mathcal{I} \setminus i} \left( \sum_{n'\in \mathcal{N}} \frac{C^{i'}_{n'} x^{i'}_{k,m,n'}|h^{i'}_{k,m,n'}|^2 }{\psi^{i'}_{k,m,n'}}+
\frac{\sum_{m' \in \mathcal{M} \setminus m} x^{i}_{k,m',n}|h^{i}_{k,m,n}|^2}{{\psi}^{i,(2)}_{k,m,n}} 
\right), \\ \qquad \text{if } x^{i}_{k',m,n}=1, \forall k'\in \mathcal{K} \setminus k, \forall i'\neq i,
\\
	\sum_{i'\in \mathcal{I} \setminus i} \sum_{n'\in \mathcal{N}} \frac{C^{i'}_{n'} x^{i'}_{k,m,n'}|h^{i'}_{k,m,n'}|^2 }{\psi^{i',(2)}_{k,m,n'}}, \\ \qquad \text{if } x^{i}_{k',m,n}=1, \forall k'=k, \forall i'\neq i.
\end{cases}
\end{aligned}
\end{equation}
After applying D.C. approximation in PA sub-problem of $\eqref{eq:obj_func_PA}$ and QoS constraint in $\eqref{C5}$, we can transform the original problem $\eqref{PA}$ into a concave problem formulated as
\begin{subequations} \label{PA_DC}
	\begin{align}
	\max_{\boldsymbol P} \ & T(\boldsymbol P)-\widehat{G}(\boldsymbol P) \label{PA_obj_dc}\\ 
	\text{s.t.} \ & \eqref{C3}, \eqref{C4}, \\ 
	& \sum_{\forall i \in \mathcal{I}} \sum_{\forall k \in \mathcal{K}} \sum_{\forall n \in \mathcal{N}} \log_2\left( \psi^{i,(1)}_{k,m,n} \right) - \widehat{G}_{2}(\boldsymbol P) \geq r^{req}_{m}, \quad \forall m\in \mathcal{M}. \label{C5_dc}
	\end{align}
\end{subequations}
Accordingly, we can obtain the optimum solution of power allocation in PA sub-problem by using arbitrary convex optimization software tools.

\subsection{Subcarrier Assignment for Proposed CESP Algorithm} \label{SA_sec}

After obtaining the candidate power allocation, we proceed to determine the subcarrier assignment based on the SA sub-problem in $\eqref{SA}$. To solve the SA sub-problem, we firstly adopt integer relaxation method \cite{18_main_ref} to convert the discrete subcarrier assignment variable $x^{i}_{k,m,n}=\{0,1\}$ to a continuous parameter, i.e., $\hat{x}^{i}_{k,m,n} \in [0,1]$. Accordingly, the original constraints of $\eqref{C1}$ and $\eqref{C2}$ can be rewritten as 
\begin{equation}
0 \leq \hat{x}^{i}_{k,m,n} \leq 1 \label{C1-1},
\end{equation}
and
\begin{equation}
\sum_{\forall m \in \mathcal{M}} \hat{x}^{i}_{k,m,n} \leq 1, \label{C2-1}
\end{equation}
respectively. However, the solution of subcarrier assignment using constraints of $\eqref{C1-1}$ and $\eqref{C2-1}$ will lead to potential quantization error and cannot achieve original optimum. Accordingly, in order to acquire the optimum while keeping all parameters continuous, motivated by the method in \cite{22_Alg_SA}, we introduce an auxiliary constraint which is given by
\begin{equation}
	\sum_{i\in \mathcal{I}} \sum_{k\in \mathcal{K}}\sum_{m\in \mathcal{M}}\sum_{n\in\mathcal{N}} \left[ \hat{x}^{i}_{k,m,n} - \left(\hat{x}^{i}_{k,m,n}\right)^2 \right] \leq 0. \label{SA new_C} 
\end{equation}
Therefore, the original problem will be unchanged, which is stated in the following lemma.
\begin{lemma}
	With an auxiliary constraint of $\eqref{SA new_C}$, the original discrete variable of subcarrier assignment $x^{i}_{k,m,n}$ is equivalently converted to a continuous $0\leq \hat{x}^{i}_{k,m,n} \leq 1$ with unchanged solution of $\{0,1\}$.
\end{lemma}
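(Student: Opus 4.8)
The plan is to exploit the sign structure of each summand in the auxiliary constraint $\eqref{SA new_C}$. First I would observe that on the relaxed domain $0 \leq \hat{x}^{i}_{k,m,n} \leq 1$ imposed by $\eqref{C1-1}$, each term factors as $\hat{x}^{i}_{k,m,n} - (\hat{x}^{i}_{k,m,n})^2 = \hat{x}^{i}_{k,m,n}\,(1-\hat{x}^{i}_{k,m,n})$, and since both factors are non-negative throughout $[0,1]$, every summand is non-negative. Consequently the whole left-hand side of $\eqref{SA new_C}$ is bounded below by zero.

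Next I would combine this lower bound with the constraint $\eqref{SA new_C}$ itself, which requires the same sum to be $\leq 0$. The two inequalities together pin the sum to exactly zero. Because a finite sum of non-negative terms can vanish only if each term vanishes individually, I conclude that $\hat{x}^{i}_{k,m,n}\,(1-\hat{x}^{i}_{k,m,n}) = 0$ for all indices $i,k,m,n$.

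Then, for each fixed index, this product equals zero if and only if $\hat{x}^{i}_{k,m,n} = 0$ or $\hat{x}^{i}_{k,m,n} = 1$, so every relaxed variable is driven back onto $\{0,1\}$, recovering the binary requirement of $\eqref{C1}$. Conversely, any binary point satisfies $\eqref{SA new_C}$ with equality, so the feasible set of the relaxed problem augmented with $\eqref{SA new_C}$ coincides exactly with that of the original integer program; hence the optimal value and the set of optimizers are unchanged, which is what the lemma asserts.

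The argument is essentially an algebraic sign estimate and is not computationally hard, so I do not expect a serious obstacle in the derivation itself. The point that deserves care is to make the equivalence explicit at the level of \emph{feasible sets}, so that no spurious fractional optimum is introduced by the relaxation. I would also flag a subtlety for motivation: although $\eqref{SA new_C}$ restores integrality, it is a reverse-convex (concave-$\geq$) type restriction and therefore reintroduces non-convexity into $\eqref{SA}$. This is precisely why the subsequent development cannot solve $\eqref{SA}$ directly and instead resorts again to D.C. approximation together with the abstract Lagrangian penalty, a remark I would use to bridge into the next subsection.
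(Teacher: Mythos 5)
Your proof is correct, and its core idea coincides with the paper's: both arguments rest on the sign behavior of the quadratic $\hat{x}^{i}_{k,m,n}-\left(\hat{x}^{i}_{k,m,n}\right)^2$ on the relaxed interval $[0,1]$, whose only roots are $0$ and $1$. There is, however, one respect in which your write-up is genuinely tighter than the paper's. The paper reasons as if \eqref{SA new_C} applied to each summand separately, asserting that the constraint ``guarantees $\hat{x}^{i}_{k,m,n}\leq 0$ and $\hat{x}^{i}_{k,m,n}\geq 1$'' (where moreover ``and'' should read ``or''); but \eqref{SA new_C} only bounds the \emph{sum} over all indices, so that inference needs justification. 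You supply exactly the missing step: on $[0,1]$ each term $\hat{x}^{i}_{k,m,n}\left(1-\hat{x}^{i}_{k,m,n}\right)$ is non-negative, hence the sum is simultaneously $\geq 0$ and, by \eqref{SA new_C}, $\leq 0$, so it vanishes, and a vanishing finite sum of non-negative terms forces every term to zero individually. You also make explicit the converse direction --- every binary point satisfies \eqref{SA new_C} with equality --- which establishes equality of the feasible sets rather than mere one-way containment, and which the paper leaves implicit. Your closing remark that \eqref{SA new_C} is a reverse-convex restriction reintroducing non-convexity is likewise accurate and is precisely why the paper proceeds via the abstract Lagrangian penalty and D.C. approximation in \eqref{P_eq_SA} and \eqref{P_SA_final}. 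In short: same route as the paper, but with the aggregation-over-indices gap closed and the set-level equivalence stated properly.
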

\begin{proof}
We can observe from $\eqref{SA new_C}$ that $\hat{x}^{i}_{k,m,n}-\left(\hat{x}^{i}_{k,m,n}\right)^2$ performs the concavity property with the maximum point when $\hat{x}^{i}_{k,m,n}=\frac{1}{2}$ and two roots at $\hat{x}^{i}_{k,m,n}=\{0,1\}$. Therefore, $\eqref{SA new_C}$ guarantees $\hat{x}^{i}_{k,m,n}\leq 0$ and $\hat{x}^{i}_{k,m,n} \geq 1$. By combining the solution of $\eqref{SA new_C}$ in $\eqref{C1-1}$, i.e., $0\leq \hat{x}^{i}_{k,m,n}\leq 1$, we can equivalently have the discrete binary parameter of $x^{i}_{k,m,n}=\{0,1\}$ but sustains $x^{i}_{k,m,n}$ as a continuous variable as $\hat{x}^{i}_{k,m,n}$. Accordingly, the original problem becomes unchanged by using integer relaxation with proposed auxiliary constraint of $\eqref{SA new_C}$, which completes the proof.
\end{proof}
After introducing $\eqref{SA new_C}$ to the original SA sub-problem $\eqref{SA}$, the equivalent SA optimization problem can be further expressed as 
\begin{subequations} \label{P_SA_wNewC}
	\begin{align}
	\max_{\boldsymbol X} \ & \Upsilon \\ 
	\text{s.t.} \ & \eqref{C4}, \eqref{C5}, \eqref{C6}, \eqref{C1-1}, \eqref{C2-1}, \eqref{SA new_C}.
	\end{align}
\end{subequations}
Notice that the constraint $\eqref{SA new_C}$ is non-convex. Therefore, we employ the abstract Lagrangian duality \cite{26_Dual_Lagr} by transforming the constraint $\eqref{SA new_C}$ into a penalty term of original objective in SA sub-problem $\eqref{P_SA_wNewC}$. The abstract Lagrangian based SA objective can be obtained as
\begin{equation}
	\mathcal{L}(\boldsymbol{X}, \lambda) = \Upsilon - \lambda \sum_{i\in \mathcal{I}} \sum_{k\in \mathcal{K}}\sum_{m\in \mathcal{M}}\sum_{n\in\mathcal{N}} \left[ \hat{x}^{i}_{k,m,n} - \left(\hat{x}^{i}_{k,m,n}\right)^2 \right],
\end{equation}
where $\lambda\geq 0$ is a penalty parameter. Note that $\boldsymbol{X}$ now indicates the solution of continuous variable $\hat{x}^{i}_{k,m,n}$ instead of the discrete one $x^{i}_{k,m,n}$. Consequently, the transformed SA problem is reformulated as
\begin{subequations} \label{P_eq_SA}
	\begin{align}
		\max_{\boldsymbol{X}}& \min_{\lambda} \ \mathcal{L}(\boldsymbol{X}, \lambda) \\ 
		\text{s.t.} \ & \eqref{C4}, \eqref{C5}, \eqref{C6}, \eqref{C1-1}, \eqref{C2-1}. \label{SA-C-n}
	\end{align}
\end{subequations}
We define $\mathcal{D}$ as the feasible set comprising all constraints in $\eqref{SA-C-n}$. The problem in $\eqref{P_eq_SA}$ can be further denoted as $\max_{\boldsymbol{X}\in \mathcal{D}} \min_{\lambda} \mathcal{L}(\boldsymbol{X}, \lambda)$, and its corresponding dual problem is expressed as $\min_{\lambda} \max_{\boldsymbol{X}\in \mathcal{D}} \mathcal{L}(\boldsymbol{X}, \lambda)$. In the following, we theoretically prove that equivalence of SA optimization problem in $\eqref{P_SA_wNewC}$ and $\eqref{P_eq_SA}$.
\begin{prop}
	For a sufficiently large $\lambda$, the SA optimization problem $\eqref{P_SA_wNewC}$ is equivalent to $\eqref{P_eq_SA}$.
\end{prop}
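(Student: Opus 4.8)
The plan is to prove the equivalence by exploiting the definite sign of the penalty term to collapse the inner minimization, and then to certify that a \emph{finite} penalty parameter suffices via abstract Lagrangian strong duality. Throughout, write $\Phi(\boldsymbol X)=\sum_{i\in\mathcal I}\sum_{k\in\mathcal K}\sum_{m\in\mathcal M}\sum_{n\in\mathcal N}\big[\hat x^{i}_{k,m,n}-(\hat x^{i}_{k,m,n})^{2}\big]$ for the bracketed term, so that $\mathcal L(\boldsymbol X,\lambda)=\Upsilon-\lambda\,\Phi(\boldsymbol X)$ and the auxiliary constraint $\eqref{SA new_C}$ reads $\Phi(\boldsymbol X)\le 0$.

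First I would record the sign property established in the preceding lemma: on the feasible set $\mathcal D$, and in particular under $\eqref{C1-1}$ giving $0\le\hat x^{i}_{k,m,n}\le 1$, each summand $\hat x^{i}_{k,m,n}(1-\hat x^{i}_{k,m,n})\ge 0$, so $\Phi(\boldsymbol X)\ge 0$ with equality if and only if every $\hat x^{i}_{k,m,n}\in\{0,1\}$. This lets me evaluate the inner minimization of $\eqref{P_eq_SA}$ in closed form: for any fixed $\boldsymbol X\in\mathcal D$,
\begin{equation}
\min_{\lambda\ge 0}\mathcal L(\boldsymbol X,\lambda)=
\begin{cases}
\Upsilon, & \text{if } \Phi(\boldsymbol X)=0,\\
-\infty, & \text{if } \Phi(\boldsymbol X)>0,
\end{cases}
\end{equation}
because $-\lambda\Phi(\boldsymbol X)$ is driven to $-\infty$ as $\lambda\to\infty$ whenever $\Phi(\boldsymbol X)>0$. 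Applying $\max_{\boldsymbol X\in\mathcal D}$ then discards every fractional point and reduces $\eqref{P_eq_SA}$ to $\max\{\Upsilon:\boldsymbol X\in\mathcal D,\ \Phi(\boldsymbol X)=0\}$, which is exactly the constrained problem $\eqref{P_SA_wNewC}$. This yields the common optimal value and the equivalence of the max--min formulation with $\eqref{P_SA_wNewC}$.

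It remains to justify the phrase \emph{for sufficiently large} $\lambda$, i.e.\ that the outer maximization need not literally send $\lambda\to\infty$. Here I would invoke weak duality, $\max_{\boldsymbol X\in\mathcal D}\min_{\lambda\ge 0}\mathcal L\le\min_{\lambda\ge 0}\max_{\boldsymbol X\in\mathcal D}\mathcal L$, together with the abstract Lagrangian duality theorem of \cite{26_Dual_Lagr}: since $\mathcal D$ is compact and both $\Upsilon$ and $\Phi$ are continuous on $\mathcal D$ (the noise terms keep every SINR denominator positive), the duality gap closes and the dual optimum is attained at some finite $\lambda^{\ast}$. For every fixed $\lambda\ge\lambda^{\ast}$ the dual function $q(\lambda)=\max_{\boldsymbol X\in\mathcal D}\mathcal L(\boldsymbol X,\lambda)$ is then maximized at a point with $\Phi=0$, which is feasible and optimal for $\eqref{P_SA_wNewC}$; hence the penalized maximization reproduces the constrained optimum exactly.

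The hard part will be certifying that $\lambda^{\ast}$ is genuinely \emph{finite} (exactness of the penalty) rather than merely that the gap vanishes in the limit. The clean intuition is that $-\lambda\hat x^{i}_{k,m,n}(1-\hat x^{i}_{k,m,n})$ is convex with curvature $2\lambda$, so once $\lambda$ exceeds half the Lipschitz modulus of $\nabla\Upsilon$ on $\mathcal D$ the objective $\mathcal L(\cdot,\lambda)$ becomes convex in each coordinate and is driven to the binary endpoints. The subtlety I expect to fight is that the coupling constraints $\eqref{C4}$--$\eqref{C6}$ can pin a coordinate at a fractional interior value, breaking the naive coordinatewise argument. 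I would therefore route finiteness through the perturbation function $p(u)=\max\{\Upsilon:\boldsymbol X\in\mathcal D,\ \Phi(\boldsymbol X)\le u\}$: showing that $p$ is upper semicontinuous with a bounded supergradient at $u=0$, which follows from compactness of $\mathcal D$ and the regularity hypotheses of \cite{26_Dual_Lagr}, furnishes a finite multiplier $\lambda^{\ast}$, after which any $\lambda\ge\lambda^{\ast}$ recovers the optimum of $\eqref{P_SA_wNewC}$.
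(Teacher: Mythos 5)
Your proof is correct and lands on the same weak-duality backbone as the paper's, but it is organized differently and is more careful at the one genuinely delicate point. The paper argues by two cases: its Case~2 shows by contradiction (if $\Phi(\boldsymbol X)>0$ then $\mu(\lambda)=\max_{\boldsymbol X\in\mathcal D}\mathcal L(\boldsymbol X,\lambda)$ is driven toward $-\infty$, contradicting the weak-duality relation) that the penalty term must vanish at the optimum, while its Case~1 simply \emph{posits} a finite dual solution $\lambda^{*}$ (``let $d^{*}$ and $\lambda^{*}$ be the feasible solutions'') and then uses monotonicity of $\mu(\lambda)$ to extend $d^{*}=\mu(\lambda)$ to all $\lambda\ge\lambda^{*}$. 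You instead evaluate the inner minimization in closed form using $\Phi(\boldsymbol X)\ge 0$ on $\mathcal D$ (from \eqref{C1-1} and the preceding lemma), which identifies the max--min problem \eqref{P_eq_SA} with the constrained problem \eqref{P_SA_wNewC} in one stroke --- a cleaner replacement for the paper's Case~2 --- and you explicitly isolate what the paper glosses over: the existence of a \emph{finite} exact penalty parameter. Two remarks on that last step. First, your intermediate assertion that compactness of $\mathcal D$ and continuity of $\Upsilon,\Phi$ close the duality gap is not valid in general for nonconvex programs; the finite multiplier must come from the calmness-type condition (your perturbation-function route with a bounded supergradient of $p(u)$ at $u=0$, or the specific hypotheses of the abstract-duality framework in \cite{26_Dual_Lagr}), so the final paragraph of your plan is the load-bearing one, and it still requires verifying calmness for this particular feasible set rather than citing compactness --- though to be fair, the paper's own proof does not do better here. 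Second, a maximizer of $\mu(\lambda)$ at $\lambda=\lambda^{*}$ itself may be infeasible; the standard argument that maximizers satisfy $\Phi=0$ works for $\lambda$ strictly greater than $\lambda^{*}$ (if $\Phi(\boldsymbol X^{\lambda})>0$ then a slightly smaller $\lambda'$ would give $\mu(\lambda')>p^{*}$, contradicting $\mu\equiv p^{*}$ on $[\lambda^{*},\infty)$), which is all the proposition needs but is worth stating. Net effect: your route buys a transparent equivalence of the max--min and constrained problems plus an honest accounting of where exactness of the penalty comes from, at the cost of invoking heavier machinery than the paper's elementary two-case argument.
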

	\begin{proof}
		The optimum of primal problem of $\eqref{P_SA_wNewC}$ is obtained as 
		\begin{equation} \label{Lp1}
		p^* = \max_{\boldsymbol X \in \mathcal{D}} \min_{\lambda} \ \mathcal{L}(\boldsymbol{X}, \lambda), 
		\end{equation}
		whereas the candidate solution of dual problem is acquired as
		\begin{equation} \label{Lp2}
		d^* = \min_{\lambda} \max_{\boldsymbol X \in \mathcal{D}} \ \mathcal{L}(\boldsymbol{X}, \lambda). 
		\end{equation}
		We define $\mu(\lambda) \triangleq \max_{\boldsymbol X \in \mathcal{D}} \mathcal{L}(\boldsymbol{X}, \lambda)$. According to the weak duality property and solutions in $\eqref{Lp1}$ and $\eqref{Lp2}$, we have the following equality as
		\begin{equation} \label{weak_duality}
		p^* = \max_{\boldsymbol X \in \mathcal{D}} \min_{\lambda} \mathcal{L}(\boldsymbol{X},\lambda) \leq \min_{\lambda}\mu(\lambda) = d^*.
		\end{equation} 
		Therefore, it comes up with two potential cases which are demonstrated as belows.
		\begin{itemize}
		\item \textbf{Case 1}:
		Suppose that we consider $\sum_{i\in \mathcal{I}} \sum_{k\in \mathcal{K}}\sum_{m\in \mathcal{M}}\sum_{n\in\mathcal{N}} \left[ \hat{x}^{i}_{k,m,n} - \left(\hat{x}^{i}_{k,m,n}\right)^2 \right]=0$ for obtaining the optimal solution. Let $d^*$ and $\lambda^*$ be the feasible solutions of $\eqref{P_SA_wNewC}$. Substituting $\lambda^*$ into the optimization problem of $\eqref{P_SA_wNewC}$ yields
		\begin{equation}
		d^* = \mu(\lambda^*) = \max_{\boldsymbol X \in \mathcal{D}} \min_{\lambda} \mathcal{L}(\boldsymbol{X},\lambda) = p^*.
		\end{equation} 
		Moreover, $\mu(\lambda)$ is a monotonically-decreasing function with respect of $\lambda$. Due to the equality of $d^* = \min_{\lambda} \mu(\lambda)$, we can therefore obtain
		\begin{equation}
		d^* = \mu(\lambda), \forall \lambda \geq \lambda^*,
		\end{equation}  
		which implies that the solution of (\ref{P_eq_SA}) results in the optimal solution of (\ref{P_SA_wNewC}) for sufficiently large values of $\lambda, \forall \lambda \geq \lambda^*$.
		
		\item \textbf{Case 2}: 
		Suppose that we consider $0 \leq \hat{x}^{i}_{k,m,n} \leq 1$, it reveals that $\sum_{i\in \mathcal{I}} \sum_{k\in \mathcal{K}}\sum_{m\in \mathcal{M}} \sum_{n\in\mathcal{N}} \\ \left[ \hat{x}^{i}_{k,m,n} - \left(\hat{x}^{i}_{k,m,n}\right)^2 \right] \geq 0$. Under inequality when the above function is greater than $0$, $\mu(\lambda^*)$ will have a tendency of approaching $-\infty$, which contradicts the derivation of weak duality equality in $\eqref{weak_duality}$. Accordingly, it should be $\sum_{i\in \mathcal{I}} \sum_{k\in \mathcal{K}}\sum_{m\in \mathcal{M}}\sum_{n\in\mathcal{N}} \left[ \hat{x}^{i}_{k,m,n} - \left(\hat{x}^{i}_{k,m,n}\right)^2 \right] = 0$ at the optimal point under $0 \leq \hat{x}^{i}_{k,m,n} \leq 1$.
		\end{itemize}
	Based on Cases 1 and 2, we can infer that  the SA optimization problem $\eqref{P_SA_wNewC}$ is equivalent to $\eqref{P_eq_SA}$ if sufficiently large $\lambda$ is given, which completes the proof.
	\end{proof}
However, we can know from problem $\eqref{P_eq_SA}$ that the objective function $\mathcal{L}(\boldsymbol{X},\lambda)$ and the subcarrier assignment constraint $\eqref{C5}$ are still not in concave forms. Hence, similar to PA sub-problem mentioned in previous subsection, we employ D.C. approximation to convert non-concave problem into a solvable concave one. First of all, we rewrite the objective function $\mathcal{L}(\boldsymbol{X}, \lambda)$ in a D.C. form which is expressed as
\begin{equation} \label{L_dc}
\mathcal{L}(\boldsymbol{X}, \lambda) = E_1(\boldsymbol{X}, \lambda) - E_2(\boldsymbol{X}, \lambda),
\end{equation} 
where $E_1(\boldsymbol{X}, \lambda)$ and $E_2(\boldsymbol{X}, \lambda)$ are respectively given by
\begin{equation}
E_1(\boldsymbol{X}, \lambda) = T(\boldsymbol X) - \lambda\left(\sum_{i\in \mathcal{I}} \sum_{k\in \mathcal{K}}\sum_{m\in \mathcal{M}}\sum_{n\in\mathcal{N}} \hat{x}_{k,m,n}^{i} \right), 
\end{equation}
and
\begin{equation}
E_2(\boldsymbol{X}, \lambda) = G(\boldsymbol X)- \lambda\left[\sum_{i\in \mathcal{I}} \sum_{k\in \mathcal{K}}\sum_{m\in \mathcal{M}}\sum_{n\in\mathcal{N}} \left(\hat{x}_{k,m,n}^{i}\right)^2 \right],
\end{equation}
with $T(\boldsymbol X)=\sum_{i\in \mathcal{I}} \sum_{k\in \mathcal{K}}\sum_{m\in \mathcal{M}}\sum_{n\in\mathcal{N}} \log_2\left( {\psi}_{k,m,n}^{i,(1)}\right)$ and $G(\boldsymbol X)=\sum_{i\in \mathcal{I}} \sum_{k\in \mathcal{K}}\sum_{m\in \mathcal{M}}\sum_{n\in\mathcal{N}} \\ \log_2\left( {\psi}_{k,m,n}^{i,(2)}\right)$. By substituting $x^{i}_{k,m,n}=\hat{x}^{i}_{k,m,n}$, the variables ${\psi}_{k,m,n}^{i,(1)}$ and ${\psi}_{k,m,n}^{i,(2)}$ are obtained in $\eqref{psi_1}$ and $\eqref{psi_2}$, respectively. Due to non-concavity property of $E_2(\boldsymbol{X}, \lambda)$, we exploit the first-order Taylor approximation, which is derived as
\begin{equation}
	\widehat{E}_2(\boldsymbol X, \lambda) \approx E_2(\boldsymbol X^{t-1}, \lambda)+ \nabla^{\dagger} E_2(\boldsymbol X^{t-1}, \lambda)\cdot(\boldsymbol X-\boldsymbol X^{t-1}),
\end{equation}
where $\nabla^{\dagger} E_{2}(\boldsymbol X, \lambda)$ is a transposed first-order derivative vector of $E_{2}(\boldsymbol X, \lambda)$ with a  length of $2K\cdot M \cdot N$. The corresponding matrix element of $\nabla E_{2}(\boldsymbol X, \lambda)$ can be obtained as
\begin{equation}
\begin{aligned}
\frac{\partial E_{2}(\boldsymbol X, \lambda)}{\partial \hat{x}^{i}_{k,m,n}}
=\frac{INI^{i}_{k,m,n}|h^{i}_{k,m,n}|^2}{\psi^{i,(2)}_{k,m,n}} + \sum_{k' \in \mathcal{K} \setminus k} \sum_{m' \in \mathcal{M} \setminus m} \left( \frac{p^{i}_{k,m',n}|h^{i}_{k,m',n}|^2}{\psi^{i,(2)}_{k,m',n}}\right) - 2\lambda \hat{x}^{i}_{k,m,n}.
\end{aligned}
\end{equation}
Similarly, the non-concave constraint $\eqref{C5}$ is also transformed to a concave function by adopting D.C. approximation as
\begin{equation}
\widehat{G}_{2}(\boldsymbol X) \approx G_2(\boldsymbol X^{t-1})+ \nabla^{\dagger} G_2(\boldsymbol X^{t-1})\cdot(\boldsymbol X-\boldsymbol X^{t-1}),
\end{equation}
where $G_2(\boldsymbol X) = \sum_{\forall i \in \mathcal{I}} \sum_{\forall k \in \mathcal{K}} \sum_{\forall n \in \mathcal{N}} \log_2\left({\psi}_{k,m,n}^{i,(2)}\right) $ and the gradient of $\nabla^{\dagger} G_2(\boldsymbol X)$ is
\begin{equation}
\frac{\partial G_2(\boldsymbol X)}{\partial \hat{x}^{i}_{k,m,n}}
=\frac{INI^{i}_{k,m,n}|h^{i}_{k,m,n}|^2}{\psi^{i, (2)}_{k,m,n}}. 
\end{equation} 
Therefore, the SA sub-problem is transformed into a concave optimization problem represented by
\begin{subequations} \label{P_SA_final}
	\begin{align}
	\max_{\boldsymbol X} \ & E_1(\boldsymbol{X}, \lambda) - \widehat{E}_2(\boldsymbol{X},\lambda) \\ 
	\text{s.t.} \ & \eqref{C3}, \eqref{C4}, \eqref{C6}, \eqref{C1-1}, \eqref{C2-1}, \\
	& \sum_{\forall i \in \mathcal{I}} \sum_{\forall k \in \mathcal{K}} \sum_{\forall n \in \mathcal{N}} \log_2\left( \psi^{i,(1)}_{k,m,n} \right) - \widehat{G}_{2}(\boldsymbol X) \geq r^{req}_{m}, \quad \forall m\in \mathcal{M}. \label{SA_C5_dc}
	\end{align}
\end{subequations}
We can obtain the optimum solution of power allocation in PA sub-problem by using arbitrary convex optimization software tools. The concrete procedure of proposed CESP algorithm is demonstrated in Algorithm \ref{alg}.
We firstly initialize the solution set of $\{\boldsymbol{X}, \boldsymbol{P}\}$, iteration counter index $t=1$, and maximum allowable iteration number $T_{max}$. With employment of D.C. approximation, the optimum solutions can be obtained in each sub-problem. The iterative algorithms CESP-SA and CESP-PA are conducted for power allocation in $\eqref{PA_DC}$ and subcarrier assignment in $\eqref{P_SA_final}$, respectively until convergence. Note that the convergence condition holds when the difference of system sum rate between two iterations is smaller than a given threshold $\gamma_{thr}$. During the process of CESP-SA at the $t$-th iteration, we can acquire the optimal subcarrier assignment $\boldsymbol{X}^{t}$ with fixed candidate power allocation in previous iteration $\boldsymbol{P}^{t-1}$, and vice versa. Accordingly, we can obtain the optimum solution set of subcarrier assignment and power allocation as $\{\boldsymbol{X}^*, \boldsymbol{P}^* \}= \{\boldsymbol{X}^t,\boldsymbol{P}^t\}$.

\begin{algorithm}[!tb]
	\caption{Proposed CESP algorithm}
	\SetAlgoLined
	\DontPrintSemicolon
	\label{alg}
	\begin{algorithmic}[1]
		\STATE {\bf Initialization:}
		Set the iteration number $t=1$, maximum iteration numbers $T_{max}$, threshold $\gamma_{thr}$, system sum rate $\Upsilon=0$, and initial solution of $\{\boldsymbol{P}^0,\boldsymbol{X}^0\}$.
\REPEAT [Proposed CESP algorithm]

	\STATE {\bf CESP-PA:} 
	\STATE - Solve power allocation $\boldsymbol{P}^t$ in $\eqref{PA_DC}$ with fixed previous SA outcome $\boldsymbol{X}^{t-1}$
		
	\STATE {\bf CESP-SA:}
	\STATE  - Solve subcarrier assignment $\boldsymbol{X}^t$ in $\eqref{P_SA_final}$ with fixed PA result $\boldsymbol{P}^{t}$
	
	\STATE Obtain the system sum rate $\Upsilon^{t}$ via solution of $\{\boldsymbol{X}^t,\boldsymbol{P}^t\}$
	\STATE Update the iteration counter $t\leftarrow t+1$ 

\UNTIL $| \Upsilon^{t}-\Upsilon^{t-1} | \leq \gamma_{thr}$ or $t \geq T_{max}$

\STATE Obtain the optimum solution set of subcarrier assignment and power allocation as $\{\boldsymbol{X}^*, \boldsymbol{P}^* \}= \{\boldsymbol{X}^t,\boldsymbol{P}^t\}$.
	\end{algorithmic}
\end{algorithm}

\subsection{Convergence Analysis of CESP} \label{CON_ANA}

The convergence of proposed CESP algorithm is derived as follows. Firstly, we prove that the solution obtained by D.C. approximation will converge to a local optimum. Afterwards, we conclude that the performance of proposed CESP algorithm will be improved after each iteration until converge. 

\begin{prop} \label{prop2}
	With D.C. approximation method, the respective solutions of each sub-problem for power allocation in $\eqref{PA_DC}$ and subcarrier assignment in $\eqref{P_SA_final}$ will converge to the optimum. 
\end{prop}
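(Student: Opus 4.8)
The plan is to prove convergence by the standard difference-of-convex (equivalently, majorization--maximization) argument, establishing that each sub-problem produces a monotonically non-decreasing sequence of objective values that is bounded above. I would carry out the proof in full for the PA sub-problem $\eqref{PA_DC}$ and then remark that the SA sub-problem $\eqref{P_SA_final}$ follows by the identical argument with $(T,G,\widehat{G},\boldsymbol P)$ replaced by $(E_1,E_2,\widehat{E}_2,\boldsymbol X)$. The foundation is the concavity of the subtracted term: since $\psi^{i,(1)}_{k,m,n}$ and $\psi^{i,(2)}_{k,m,n}$ in $\eqref{psi_1}$ and $\eqref{psi_2}$ are affine in $\boldsymbol P$, both $T(\boldsymbol P)$ and $G(\boldsymbol P)$ are concave (a logarithm of an affine map is concave), and the first-order Taylor expansion of a concave function is a global over-estimator,
\begin{equation}
G(\boldsymbol P)\;\leq\;G(\boldsymbol P^{t-1})+\nabla^{\dagger}G(\boldsymbol P^{t-1})\cdot(\boldsymbol P-\boldsymbol P^{t-1})=\widehat{G}(\boldsymbol P),
\end{equation}
with equality at $\boldsymbol P=\boldsymbol P^{t-1}$. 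Hence $T-\widehat{G}$ is a concave minorant of $\Upsilon=T-G$ that is tight at the current iterate.

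Letting $\boldsymbol P^{t}$ solve the concave surrogate $\eqref{PA_DC}$ (which is well posed because the feasible set of $\eqref{C3}$, $\eqref{C4}$ and the linearized rate constraint $\eqref{C5_dc}$ is convex and compact), I would chain the two facts above to obtain
\begin{equation}
\Upsilon(\boldsymbol P^{t})=T(\boldsymbol P^{t})-G(\boldsymbol P^{t})\;\geq\;T(\boldsymbol P^{t})-\widehat{G}(\boldsymbol P^{t})\;\geq\;T(\boldsymbol P^{t-1})-\widehat{G}(\boldsymbol P^{t-1})=\Upsilon(\boldsymbol P^{t-1}),
\end{equation}
where the first inequality is the over-estimator property, the second is optimality of $\boldsymbol P^{t}$ for the surrogate (with $\boldsymbol P^{t-1}$ feasible), and the final equality is tightness of the linearization at $\boldsymbol P^{t-1}$. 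Thus $\{\Upsilon(\boldsymbol P^{t})\}$ is non-decreasing across iterations.

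Next I would argue boundedness and identify the limit. Constraints $\eqref{C3}$ and $\eqref{C4}$ render the feasible power set compact and $\Upsilon$ is continuous, so $\{\Upsilon(\boldsymbol P^{t})\}$ is bounded above; being monotone, it converges. To show the limit $\boldsymbol P^{\star}$ is a stationary (KKT) point of the non-convex sub-problem $\eqref{PA}$---the local optimum guaranteed by the D.C. procedure---I would use that at a fixed point $\boldsymbol P^{\star}=\boldsymbol P^{t}=\boldsymbol P^{t-1}$ the surrogate gradient coincides with the true gradient, $\nabla\widehat{G}(\boldsymbol P^{\star})=\nabla G(\boldsymbol P^{\star})$ by construction of the Taylor expansion, so the KKT system of the surrogate collapses exactly onto that of the original sub-problem.

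The main obstacle I anticipate is not the monotonicity chain, which is routine once the over-estimator property is secured, but rather justifying that the limit is a genuine stationary point rather than merely a point at which the objective stalls; this requires the gradient-matching argument together with continuity and closedness of the solution map. A related subtlety is verifying that the linearized QoS constraint $\eqref{C5_dc}$ (and, for the SA sub-problem, the relaxed assignment constraints) constitutes a valid inner approximation, so that the previous iterate $\boldsymbol P^{t-1}$ remains feasible for the surrogate at every step and the second inequality above is legitimate throughout.
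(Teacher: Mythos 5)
Your proposal takes essentially the same route as the paper's proof: the identical D.C./majorization chain $\Upsilon(\boldsymbol P^{t}) \geq T(\boldsymbol P^{t})-\widehat{G}(\boldsymbol P^{t}) \geq T(\boldsymbol P^{t-1})-\widehat{G}(\boldsymbol P^{t-1}) = \Upsilon(\boldsymbol P^{t-1})$, resting on the concave over-estimator property of the first-order Taylor expansion, carried out for PA and transferred verbatim to SA. Your supplementary steps --- boundedness via compactness of the feasible set, the gradient-matching argument showing the fixed point satisfies the KKT conditions of the original sub-problem, and the verification that $\boldsymbol P^{t-1}$ remains feasible for the linearized QoS constraint $\eqref{C5_dc}$ since $\widehat{G}_{2}$ yields a valid inner approximation tight at the expansion point --- are sound and in fact supply rigor that the paper's proof leaves implicit, but they elaborate rather than replace the paper's argument.
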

\begin{proof}
We first define $\mathcal{F}(\boldsymbol{P},\boldsymbol{X})$ as the sum-rate objective function which can be further expressed as the difference of two concave function $\mathcal{F}_1(\boldsymbol{P},\boldsymbol{X})-\mathcal{F}_2(\boldsymbol{P},\boldsymbol{X})$. With the employment of D.C. approximation approach, the second term $\mathcal{F}_2(\boldsymbol{P},\boldsymbol{X})$ can be asymptotically approximated to a linear function. Since the proposed CESP performs iterative schemes for PA and SA sub-problems, the optimum power allocation can be obtained with fixed subcarrier assignment, and vice versa. Therefore, we only have to prove its convergence of either sub-problem. For PA sub-problem, due to the concavity property of $\mathcal{F}_2(\boldsymbol{P})$ under fixed $\boldsymbol{X}$, we can have the following inequality at iteration $t$ as
		\begin{equation} \label{3.35}
		\mathcal{F}_2(\boldsymbol{P}^t) \leq \mathcal{F}_2(\boldsymbol{P}^{t-1})+\nabla \mathcal{F}_2(\boldsymbol{P}^{t-1}) \cdot (\boldsymbol{P}^{t}-\boldsymbol{P}^{t-1}).
		\end{equation}
		Therefore, we can obtain the following property as
		\begin{align}
		\mathcal{F}(\boldsymbol{P}^t)
		&=\mathcal{F}_1(\boldsymbol{P}^t) - \mathcal{F}_2(\boldsymbol{P}^t) \notag \\
		& \overset{\underset{(a)}{}}{\geq} \mathcal{F}_1(\boldsymbol{P}^t) - \mathcal{F}_2(\boldsymbol{P}^{t-1})-\nabla \mathcal{F}_2(\boldsymbol{P}^{t-1}) \cdot (\boldsymbol{P}^{t}-\boldsymbol{P}^{t-1}) \notag \\
		& \overset{\underset{(b)}{}}{=} \max_{\boldsymbol{P}} \ \mathcal{F}_1(\boldsymbol{P}) - \mathcal{F}_2(\boldsymbol{P}^{t-1}) - \nabla \mathcal{F}_2(\boldsymbol{P}^{t-1})\cdot(\boldsymbol{P}-\boldsymbol{P}^{t-1}) \notag \\
		& \overset{\underset{(c)}{}}{\geq} \mathcal{F}_1(\boldsymbol{P}^{t-1}) - \mathcal{F}_2(\boldsymbol{P}^{t-1})-\nabla \mathcal{F}_2(\boldsymbol{P}^{t-1})\cdot(\boldsymbol{P}^{t-1}-\boldsymbol{P}^{t-1}) \notag \\
		& = \mathcal{F}_1(\boldsymbol{P}^{t-1}) - \mathcal{F}_2(\boldsymbol{P}^{t-1}),
		\end{align}
		where (a) holds based on concavity property in $\eqref{3.35}$. The equivalent optimization problem is demonstrated in (b) at iteration $t$. On the other hand, (c) means that the sum rate with the optimal power allocation will exceed that in previous iteration $t-1$. Moreover, the optimum solution $\boldsymbol{P}^{*}$ can be obtain when $t \rightarrow \infty$, i.e., $ \lim_{t \rightarrow \infty} \left( \arg\!\max_{\boldsymbol{P}^{t}} \mathcal{F}(\boldsymbol{P}^{t})\right) = \boldsymbol{P}^{*}$. That is, the objective will either be achieved or unchanged after iteration $t$. Similarly, with D.C. approximation, the SA sub-problem also converges to the optimum. Therefore, the respective solutions of each sub-problem for power allocation in $\eqref{PA_DC}$ and subcarrier assignment in $\eqref{P_SA_final}$ will converge to the optimum. This completes the proof.
	\end{proof}

\begin{prop}
	The optimal solutions of power allocation and subcarrier assignment by using the proposed CESP algorithm demonstrated in Algotithm \ref{alg} will be improved after each iteration until convergence.
	\end{prop}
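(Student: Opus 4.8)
The plan is to show that the sequence of sum-rate values $\{\Upsilon^{t}\}$ generated by Algorithm~\ref{alg} is monotonically non-decreasing and bounded above, so that it converges by the monotone convergence theorem. The argument rests on the block-coordinate-ascent structure of CESP together with the per-sub-problem optimality already guaranteed by Proposition~\ref{prop2}. I would write the objective as $\mathcal{F}(\boldsymbol P, \boldsymbol X)$ and track its value through the two updates that constitute a single iteration $t$.

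First I would examine the CESP-PA step. At iteration $t$ the power allocation $\boldsymbol P^{t}$ is obtained by maximizing the objective over $\boldsymbol P$ with $\boldsymbol X^{t-1}$ held fixed; since the previous iterate $\boldsymbol P^{t-1}$ is itself feasible for this sub-problem, Proposition~\ref{prop2} yields $\mathcal{F}(\boldsymbol P^{t}, \boldsymbol X^{t-1}) \geq \mathcal{F}(\boldsymbol P^{t-1}, \boldsymbol X^{t-1})$. Next I would examine the CESP-SA step, in which $\boldsymbol X^{t}$ maximizes the objective over $\boldsymbol X$ with $\boldsymbol P^{t}$ fixed; by the same reasoning $\mathcal{F}(\boldsymbol P^{t}, \boldsymbol X^{t}) \geq \mathcal{F}(\boldsymbol P^{t}, \boldsymbol X^{t-1})$. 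Chaining the two inequalities gives $\Upsilon^{t} = \mathcal{F}(\boldsymbol P^{t}, \boldsymbol X^{t}) \geq \mathcal{F}(\boldsymbol P^{t-1}, \boldsymbol X^{t-1}) = \Upsilon^{t-1}$, so the sum rate never decreases across an iteration.

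It then remains to establish an upper bound. I would argue that the feasible set is compact: the power variables are confined to $[0, p^{max}]$ with the per-BS budget $P^{max}_{k}$ of \eqref{C4}, while the relaxed subcarrier variables lie in $[0,1]$ by \eqref{C1-1}. Consequently each SINR $\Gamma^{i}_{k,m,n}$ in \eqref{eq:SINR}, hence each rate $r^{i}_{k,m,n}$ and the total sum rate $\Upsilon$ in \eqref{eq:SR}, is bounded above by a finite constant. A monotonically non-decreasing sequence that is bounded above converges, which shows that $\{\Upsilon^{t}\}$ converges and that the stopping criterion $|\Upsilon^{t}-\Upsilon^{t-1}| \leq \gamma_{thr}$ is eventually satisfied.

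The hard part will not be the monotonicity chain itself but justifying that feasibility is preserved from one sub-problem to the next, namely that $\boldsymbol P^{t-1}$ stays feasible when $\boldsymbol X^{t-1}$ is fixed in CESP-PA and that $\boldsymbol X^{t-1}$ stays feasible when $\boldsymbol P^{t}$ is fixed in CESP-SA, so that the previous iterate is always an admissible candidate against which the new optimum can be compared. Care is also needed because the D.C. approximations replace the true objective and the rate constraint \eqref{C5} by surrogate concave functions; I would note that the first-order Taylor expansion is tight at the expansion point $\boldsymbol P^{t-1}$ (respectively $\boldsymbol X^{t-1}$), so the surrogate coincides with the true objective there and the improvement proven on the surrogate transfers to the true sum rate, exactly as established inside the proof of Proposition~\ref{prop2}.
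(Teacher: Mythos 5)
Your proof takes essentially the same route as the paper: both establish the two-step monotonicity chain $\mathcal{F}(\boldsymbol{P}^{t-1},\boldsymbol{X}^{t-1}) \leq \mathcal{F}(\boldsymbol{P}^{t},\boldsymbol{X}^{t-1}) \leq \mathcal{F}(\boldsymbol{P}^{t},\boldsymbol{X}^{t})$ by invoking Proposition~\ref{prop2} once per sub-problem and concluding that the objective is non-decreasing until convergence. Your additions---the explicit upper bound from the compact feasible set (powers in $[0,p^{max}]$ with budget \eqref{C4}, relaxed indicators in $[0,1]$) combined with the monotone convergence theorem, plus the checks that the previous iterate stays feasible for each sub-problem and that the Taylor surrogates are tight at the expansion point---are refinements the paper leaves implicit, and they make the convergence claim more rigorous than the paper's own two-inequality argument.
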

	\begin{proof}
		For CESP-PA at iteration $t+1$, we solve the power allocation $\boldsymbol{P}^{t+1}$ with fixed subcarrier assignment $\boldsymbol{X}^t$. According to Proposition \ref{prop2}, we can obtain the following inequality as
		\begin{equation} \label{prop3-1}
		\mathcal{F}(\boldsymbol{P}^t,\boldsymbol{X}^t) \leq 	\mathcal{F}(\boldsymbol{P}^{t+1},\boldsymbol{X}^t).
		\end{equation}  
Similarly, for CESP-SA, we acquire the optimal subcarrier assignment $\boldsymbol{X}^{t+1}$ with fixed power allocation result $\boldsymbol{P}^{t+1}$ as
		\begin{equation} \label{prop3-2}
		\mathcal{F}(\boldsymbol{P}^{t+1},\boldsymbol{X}^{t}) \leq 	\mathcal{F}(\boldsymbol{P}^{t+1},\boldsymbol{X}^{t+1}).
		\end{equation} 
		Hence, by combining $\eqref{prop3-1}$ and $\eqref{prop3-2}$, in can be inferred that the objective function will be achieved until convergence, and the optimum can be achieved if $t$ is sufficient large, i.e., $ \{\boldsymbol{P}^{*}, \boldsymbol{X}^{*}\} = \lim_{t \rightarrow \infty} \left( \arg\!\max_{\{\boldsymbol{P}^{t}, \boldsymbol{X}^{t}\}} \mathcal{F}(\boldsymbol{P}^{t}, \boldsymbol{X}^{t})\right)$. This completes the proof.
	\end{proof}

\section{Performance Evaluation} \label{PER_EVA}
	We evaluate the performance of proposed CESP algorithm for CoMP enhanced MNN through simulations. We consider $2$ BSs deployed with transmission radius of $100$ m serving $8$ users, where the inter-BS distance is $200$ m. Note that the users in the network include uniform and edge users, i.e., uniform users are distributed within the radius of $(0,80]$ m, whilst edge users are in that of $(80,100]$ m. We define $0\leq \eta\leq 1$ as the ratio of edge users, which means that more edge users should be served with larger values of $\eta$. Furthermore, two numerologies are configured with the number of subcarriers per numerology $N=8$ and respective ScS of $\Delta f_1=15$ kHz and $\Delta f_2=30$ kHz. The channel fading follows the pathloss model of $PL(d)=61.4+34.1\log_{10}(d)$ defined in \cite{25_5G_channel_PL}, i.e., $g_{k,m} = \sqrt{10^{\left(-PL(d)/10\right)}}$; while $\beta_{k,m,n}^{i} \sim \exp(1)$ indicates the Rayleigh fading effect following exponential distribution. Note that users are considered to have the same requirements, i.e., $P^{max}_{k}=P^{max}$, $r^{req}_{m}=r^{req}$, and $l^{req}_{m}=l^{req}$, $\forall k\!\in\!\mathcal{K}, m\!\in\!\mathcal{M}$. The remaining system parameters are listed in Table \ref{syspara}. We first evaluate the performance of proposed CESP algorithm in terms of its convergence and system sum rate under different maximum allowable transmit power, QoS requirements and latency restrictions. Then, we compare the performance of CoMP enhanced multi-numerology with that of non-CoMP and single-numerology mechanisms with different users distributions including uniform and edge users. Furthermore, we compare our proposed CESP scheme with other benchmarks in existing literatures.

\begin{table}[!ht]
\small
	\centering
	\caption {System Parameters of CoMP enhanced MNN}
	\begin{adjustbox}{max width=1\textwidth}
		\begin{tabular}{ll}
			\hline
		System Parameters & Value\\ \hline \hline
		BS serving coverage radius & 100 m\\
		Operating center frequency & 3.5 GHz \\
		Number of BSs & 2\\
		Number of users & 8\\
		Number of numerology & $2$\\
		Number of subcarriers per numerology & 8\\			
		ScS of numerology 1 and 2 & $\{15,30\}$ kHz\\
		Noise power & $-90$ dBm\\
		Maximum transmit power of BS & $23$ dBm\\			
		Latency of numerology 1 and 2 & $\{0.5,1\}$ ms\\
		Latency requirement & $0.75$ ms\\		
		Penalty factor & $10^3$\\
		Threshold of convergence & $0.1$ b/s/Hz \\
		Maximum allowable iterations & $100$ \\
			\hline
		\end{tabular} \label{syspara}
	\end{adjustbox}
\end{table}

\subsection{Convergence and Sum-Rate Performance of CESP}
As shown in Fig. \ref{Fig.1}, we have exhibited the convergence of proposed CESP algorithm considering different maximum transmit power $P^{max}$. We can observe from the figure that the sum rate quickly converges and becomes saturated with around $6$ iterations.  Additionally, it can be inferred that we have higher system sum rate using higher transmit power, i.e., it achieves sum rate of around $132$ b/s/Hz under $P^{max}=30$ dBm. This is because that the proposed CESP algorithm possesses higher degree of freedom of candidate power allocation to mitigate strong interferences of INI and ICI, which leads to higher SINR values.

\begin{figure}
	\centering
	\includegraphics[width=3.1in]{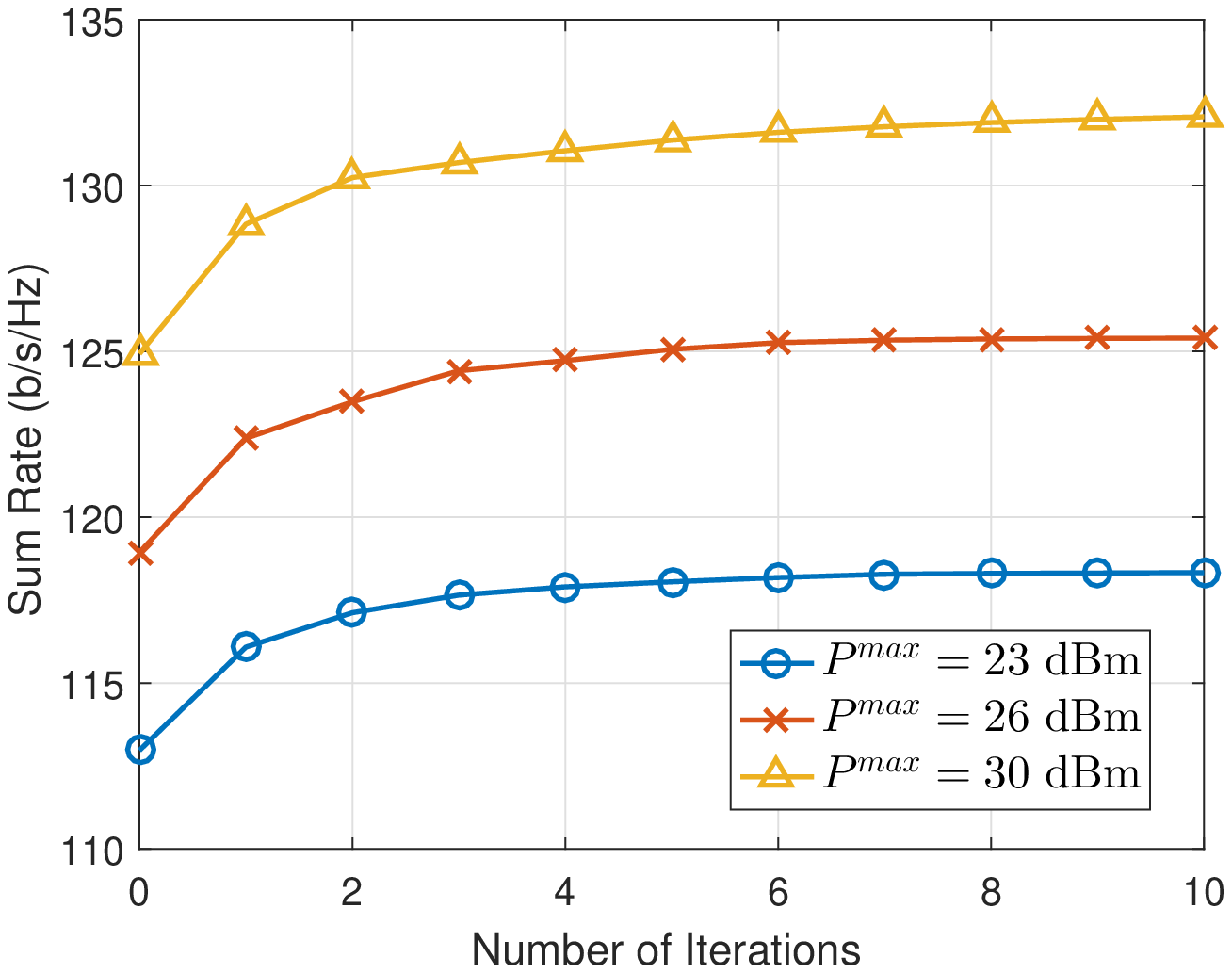}
	\caption{Convergence of proposed CESP algorithm with different maximum transmit power $P^{max}=\{23, 26, 30\}$ dBm, ratio of edge users $\eta=0.5$, QoS requirement $r^{req}=1$ b/s/Hz and latency constraint $l^{req}=0.75$ ms.}
	\label{Fig.1}
\end{figure}

Fig. \ref{Fig.2} depicts the sum rate performances versus different maximum transmit power considering different QoS requirements $r^{req}$ and latency constraints $l^{req}$ which are demonstrated in Figs. \ref{fig_2_1} and \ref{fig_2_2}, respectively. As explained previously, larger transmit power is beneficial to provide higher system sum rate. As shown in Fig. \ref{fig_2_1}, we notice that higher QoS demands $r^{req}$ will lead to lower sum rate performance due to stringent service requirements. In CoMP enhanced MN network, the BSs will potentially require more resources to serve those highly-demanding edge users. However, if no QoS required, i.e., $r^{req}=0$ b/s/Hz, the system will tends to allocate resources to users capable of achieving the highest rate due to the original sum-rate maximization problem. As demonstrated in Fig. \ref{fig_2_2}, it reveals that system sum rate degrades with decrements of $l^{req}$, i.e., stringent latency will confine the system performance for CoMP enhance MNN. The BSs may require more resource blocks of MN with shorter timeslots to satisfy the latency requirement. Furthermore, in order to investigate the effects of maximum transmit power and QoS constraints, we define the data rate outage as
\begin{equation} \label{Dout}
\Upsilon_{out} = 1 - \frac{1}{M} \sum_{m\in \mathcal{M}} \mathbbm{1} \left( \sum_{\forall i \in \mathcal{I}} \sum_{\forall k \in \mathcal{K}} \sum_{\forall n \in \mathcal{N}} r_{k,m,n}^{i} \geq r^{req} \right).
\end{equation}
As depicted in Fig. \ref{Fig.3}, we evaluate data rate outage under different maximum transmit power and QoS requirements. We can observe that it has higher outage of about $\Upsilon_{out} = 0.38$ under $r^{req}=7$ b/s/Hz and $P^{max}=23$ dBm. This is because resources are insufficient to support highly-demanding users. However, it has a decreasing trends of data rate outage with escalating transmit power from $P^{max}=23$ to $36$ dBm. It also demonstrates that zero outage can be achieved where both normal and edge users meet their demands, if more transmit power is assigned under rigorous QoS requirement of $7$ b/s/Hz.

\begin{figure}[htbp]
	\centering
	\subfigure[]{
		\includegraphics[width=3.1in]{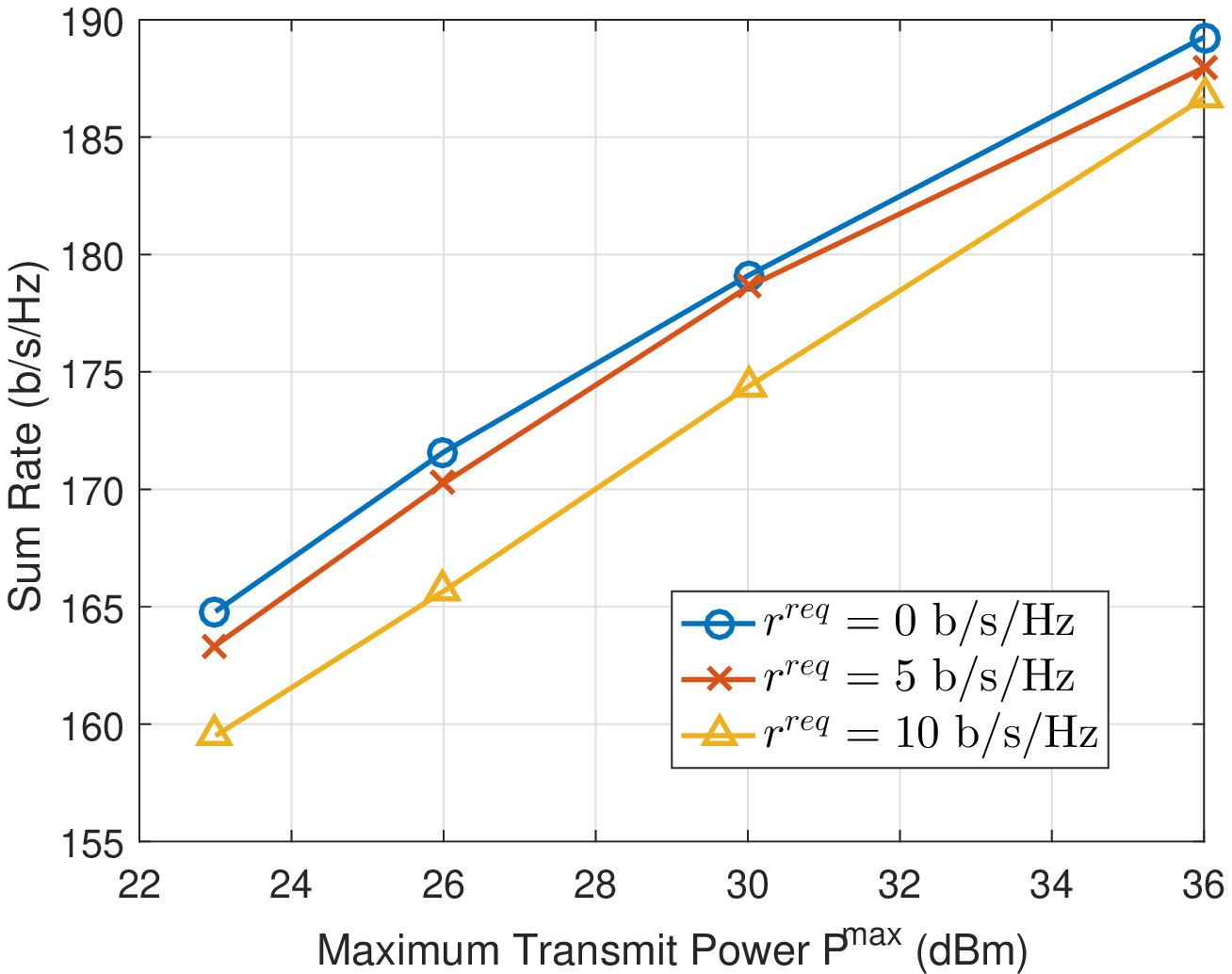}
		\label{fig_2_1}	}
	\subfigure[]{
		\includegraphics[width=3.1in]{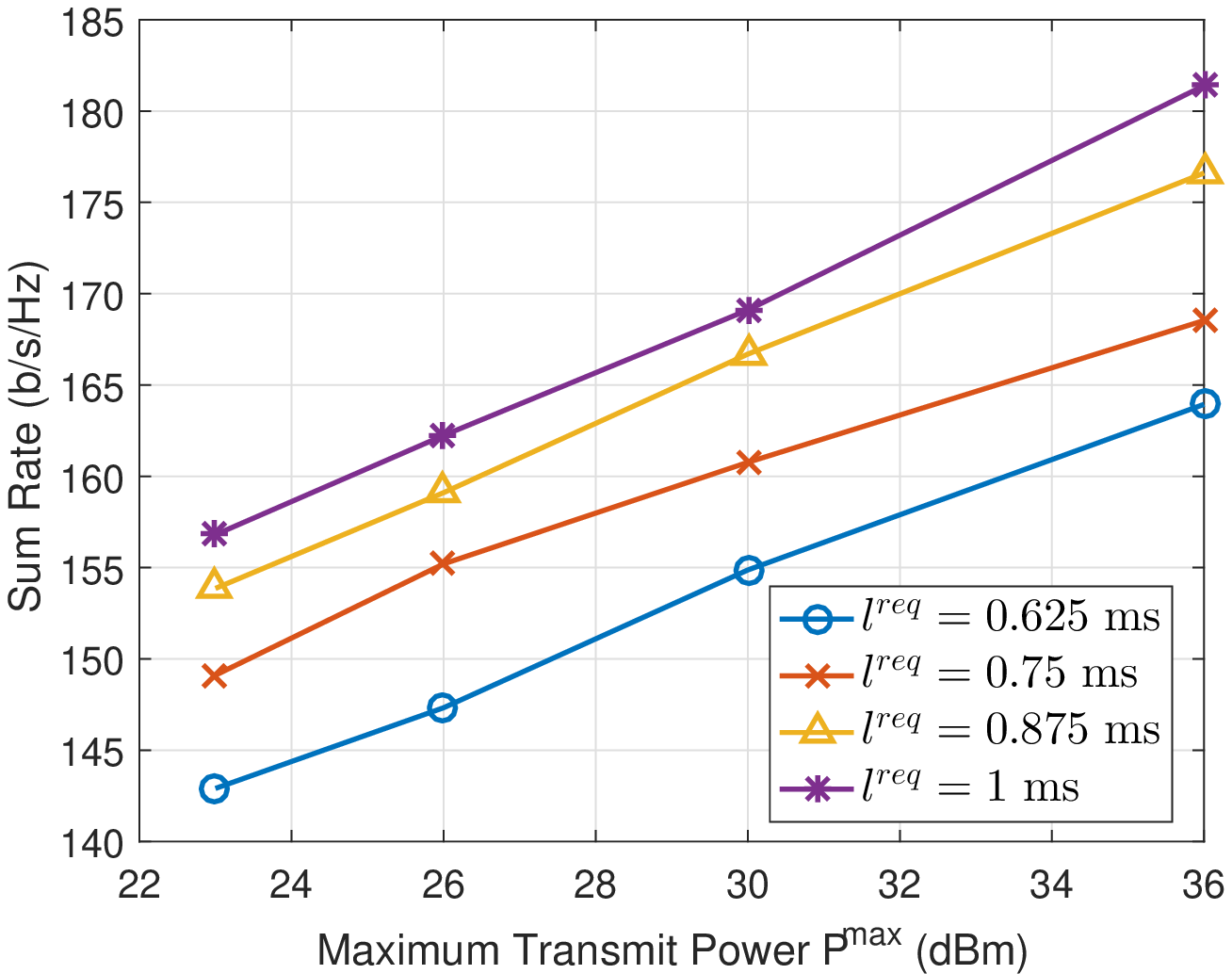}
		\label{fig_2_2}}
	\caption{Performance of sum rate of CESP
	 versus maximum transmit power $P^{max}$ under various (a) QoS requirements $r^{req}=\{0,5,10\}$ b/s/Hz and (b) latency constraints $l^{req} = \{0.625,0.75,0.875,1\}$ ms with radio of edge user $\eta=0.5$.}\label{Fig.2}
\end{figure}

\begin{figure}[htbp]
	\centering
	\includegraphics[width=3.3in]{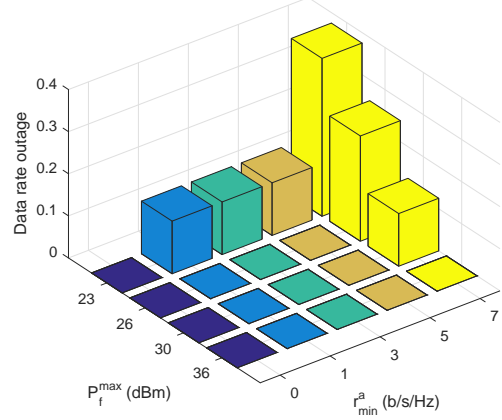}
	\caption{Data rate outage of CESP under different maximum transmit power $P^{max}=\{23,26,30,36\}$ dBm and QoS requirement $r^{req}=\{0,1,3,5,7\}$ b/s/Hz with ratio of edge users $\eta=0.5$ and latency constraint $l^{req}=0.75$ ms.}
	\label{Fig.3}
\end{figure}

\subsection{Effect of Different User Distributions}

\begin{figure}[htbp]
	\centering
	\subfigure[]{
		\includegraphics[width=3.1in]{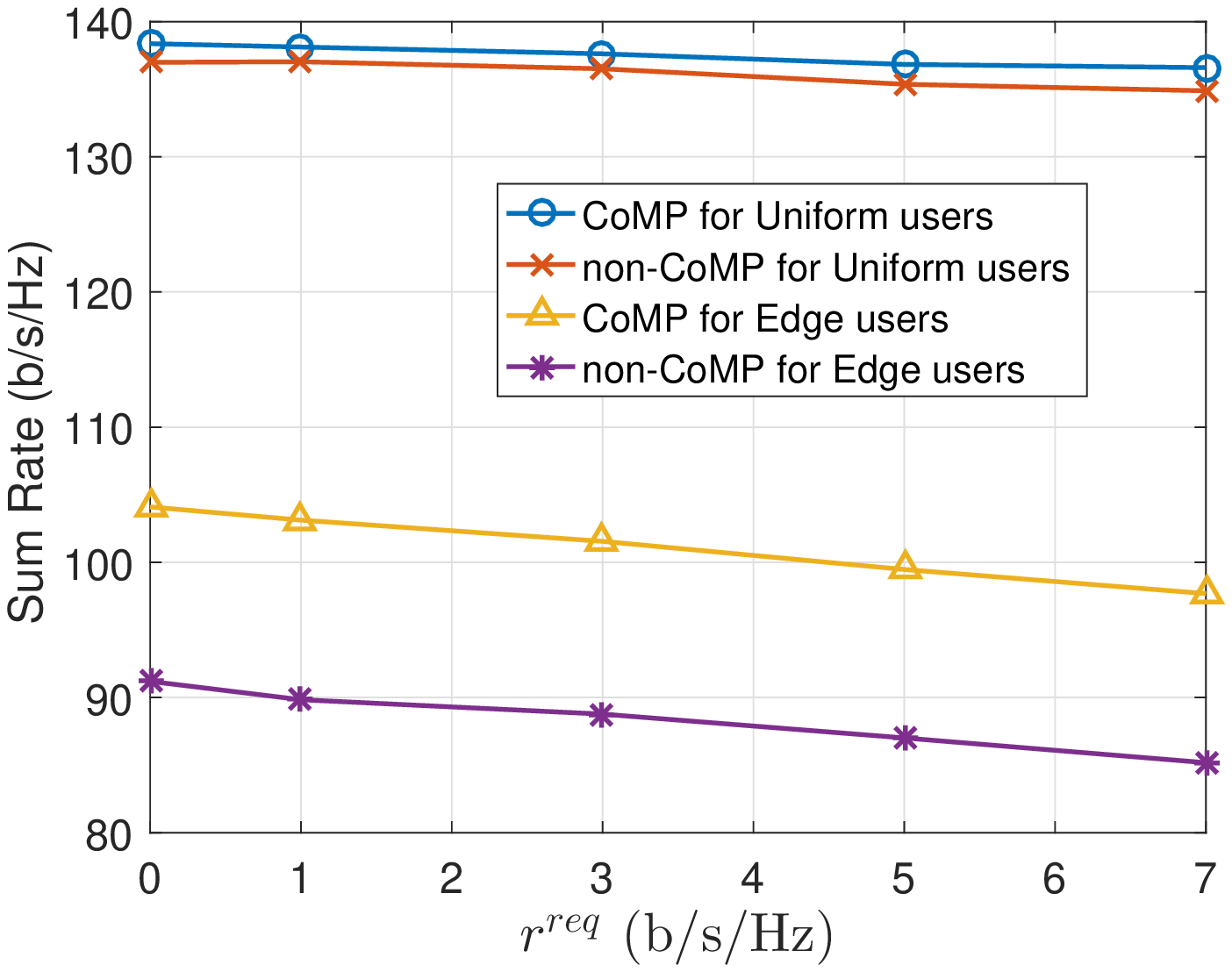}
		\label{Fig.4}	}
	\subfigure[]{
		\includegraphics[width=3.1in]{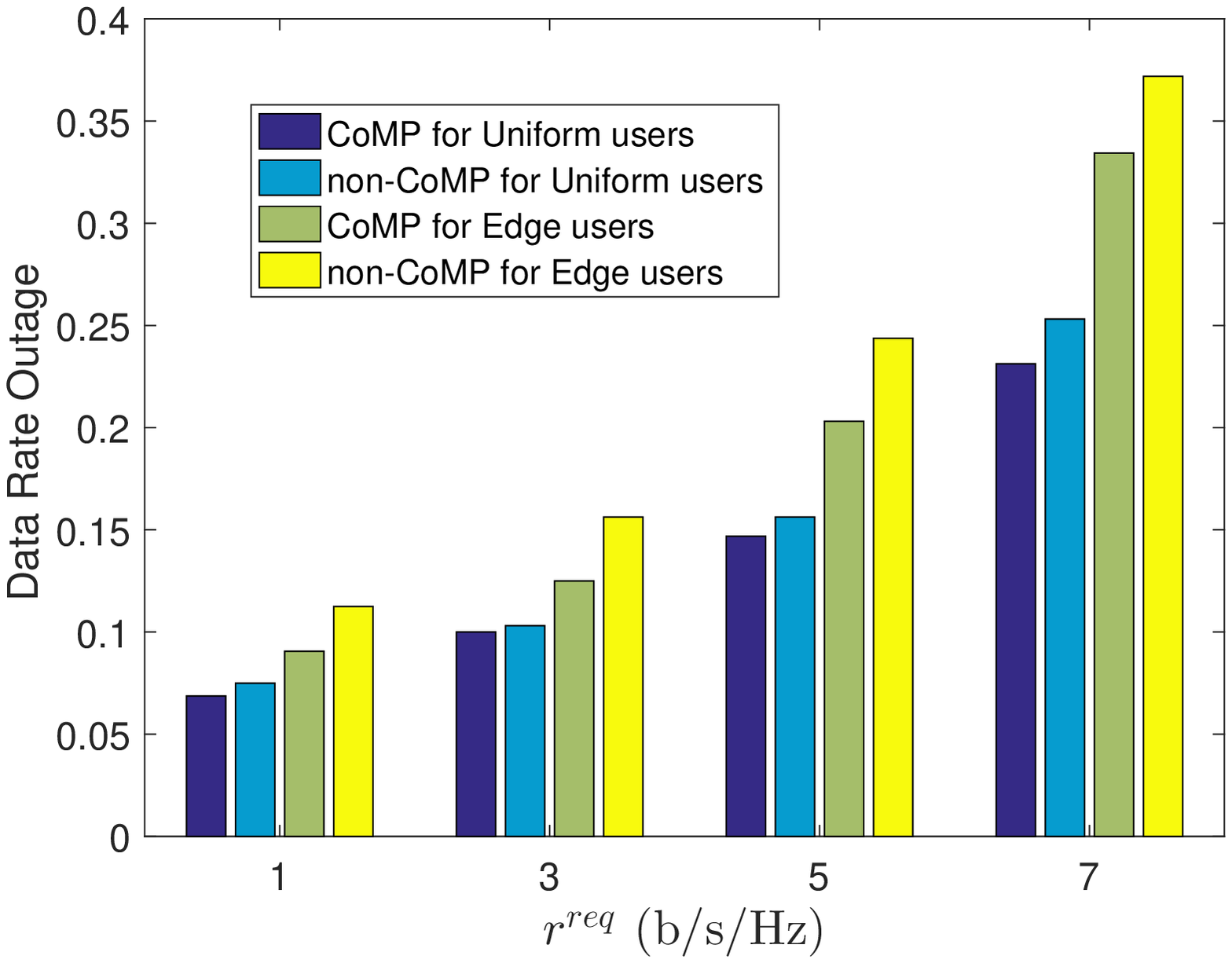}
		\label{Fig.5}}
	\caption{Performance of proposed CESP algorithm in terms of (a) sum rate and (b) data rate outage versus different QoS requirement $r^{req}$ with maximum transmit power $P^{max}=23$ dBm, ratio of edge users $\eta=0.5$ and latency constraint $l^{req}=0.75$ ms.}
	\label{Fig.45}
\end{figure}

We consider two types of user distributions including normal and edge users, which are defined at the beginning of this section. The edge users have comparably lower throughput performance due to ICI from non-serving BSs. However, CoMP transmission possesses the capability to improve SINR by transforming interfered BSs into coordinated ones. Therefore, we evaluate CoMP for uniform and edge users compared with conventional non-CoMP techniques, which are demonstrated in Fig. \ref{Fig.45}. Note that under conventional non-CoMP technique the user can be only served by a single BS. As shown in Fig. \ref{Fig.4}, we can observe that higher QoS requirements lead to lower sum rate performance due to rigorous service demands. For edge users, the proposed CESP algorithm under CoMP transmissions achieves higher sum rate compared to non-CoMP mechanism with a difference of around $15$ b/s/Hz. This is because strong ICI is potentially alleviated through coordinated joint transmissions among BSs. Moreover, we can observe that adoption of CoMP will provide higher performance increase on edge users, i.e., around $14$ b/s/Hz more on sum rate of CoMP than that with non-CoMP under different QoS requirements. As depicted in Fig. \ref{Fig.5}, we further evaluate data rate outage performance in $\eqref{Dout}$ with different QoS demands comparing CoMP/non-CoMP for uniform/edge users. It can be inferred that higher QoS requirements potentially provoke higher outage due to insufficient subcarrier and power resources that can be assigned to edge users. Moreover, benefited by less ICI, the lowest data rate outage is achieved for uniform users with CoMP mechanism under different QoS requirements. On the other hand, with the adoption of CESP scheme, the ICI interferences can be effectively mitigated by the CoMP mechanism especially for edge users. Much lower data rate outage is obtained for CoMP-enabled edge users compared to non-CoMP transmissions, e.g., around $5\%$ less outage under $r^{req} = 7$ b/p/Hz.

\subsection{Comparison of CoMP/non-CoMP with MN/SN}

\begin{figure}[!t]
	\centering
	\subfigure[]{
		\includegraphics[width=3in]{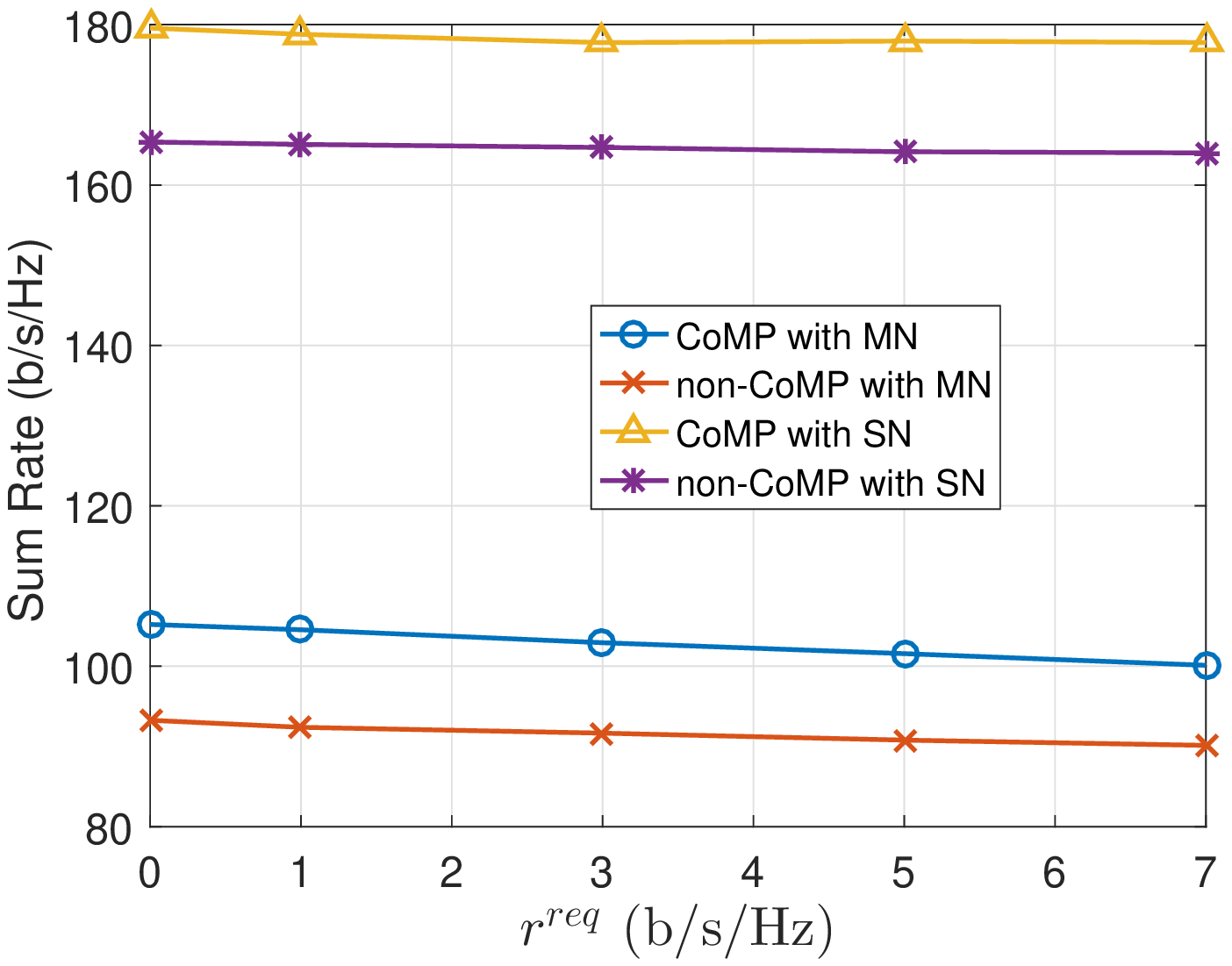}
		\label{fig_6_1}}
	\subfigure[]{
		\includegraphics[width=3in]{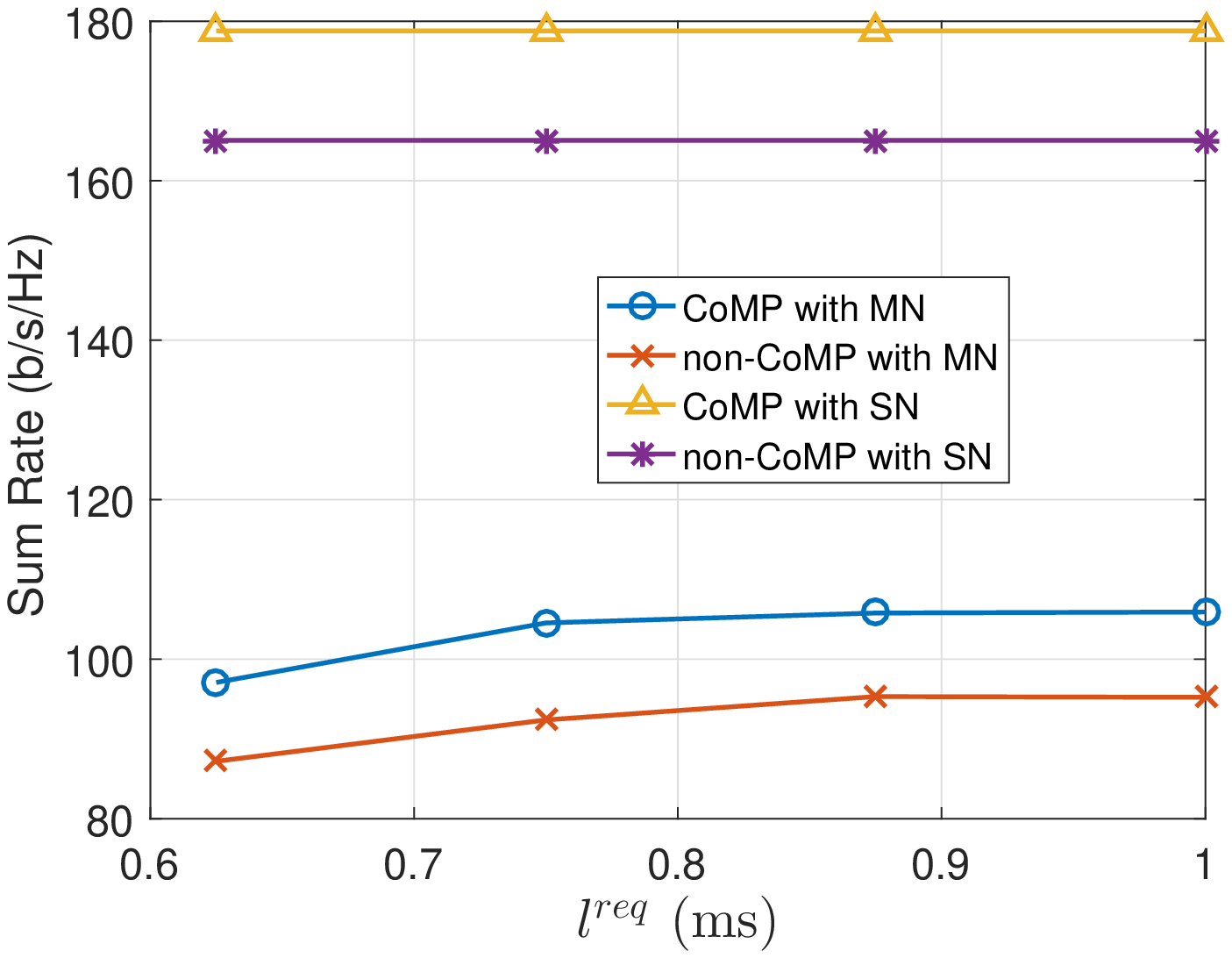}
		\label{fig_6_2}}
	\caption{The sum rate performance of proposed CESP algorithm with different (a) QoS requirements $r^{req}$ and (b) latency constraints $l^{req}$ under scenarios of CoMP/non-CoMP for MN/SN considering maximum transmit power $P^{max}=23$ dBm and ratio of edge users $\eta=0.5$.}\label{Fig.6}
\end{figure}

\begin{figure}[!t]
	\centering
	\includegraphics[width=3.3in]{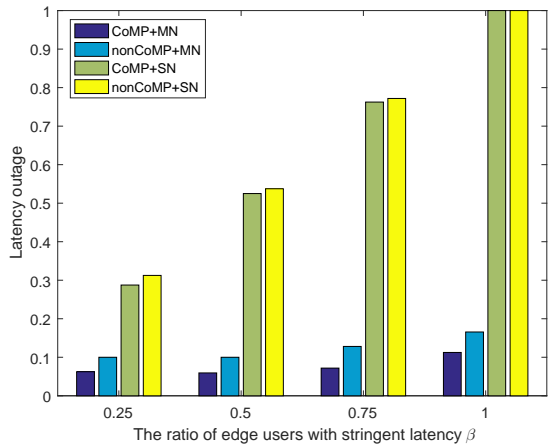}
	\caption{Latency outage of CESP considering different ratios of edge users $\eta=\{0.25,0.5,0.75,1\}$ with maximum transmit power $P^{max}=23$ dBm, QoS requirement $r^{req}=5$ b/s/Hz and latency constraint $l^{req}=0.75$ ms.}
	\label{Fig.7}
\end{figure}

	As shown in Fig. \ref{Fig.6}, we evaluate the system performance of proposed CESP algorithm considering MN system comparing with single-numerology (SN) structure under CoMP and non-CoMP transmissions with respect to different QoS and latency constraints. Note that we consider conventional structure for SN, i.e., the ScS is equal to $15$ kHz if configured for all resource blocks. We can observe from Fig. \ref{fig_6_1} that MN has lower sum rate than SN due to the existence of INI, which leads to a degradation of about $70$ b/s/Hz. Note that INI is induced by non-orthogonality property between different numerologies. However, it will be beneficial to employ joint transmission of CoMP to mitigate interference, which improves sum rate performance of about $15$ b/s/Hz in both MN and SN networks. Furthermore, we can observe from Fig. \ref{fig_6_2} that sum rate of MN networks escalates with looser latency restrictions, whilst unchanged performance is shown in SN system. This is because MN is capable of flexibly assigning resources with different numerologies to fulfill various latency requirements, whilst SN cannot provide diverse service demands due to its fixed structure. Therefore, we study the significance of MN in terms of latency outage performance which is defined as
\begin{equation} \label{Lout}
l_{out} = 1 - \frac{1}{M} \sum_{m\in \mathcal{M}} \mathbbm{1} \left( l_{m} \leq l^{req}_{m} \right),
\end{equation}
where the indicator function represents whether latency constraint is satisfied by the users. In Fig. \ref{Fig.7}, the latency outage with different ratios of edge users of $\eta$ is performed. We can intuitively know that higher latency outage occurs due to insufficient resources to serve increased number of edge users. Furthermore, we can infer from the figure that SN possesses the highest outage due to its incapability to support different time durations, e.g., a full latency outage takes places if SN is configured to serve all edge users. However, under MN enhanced system, flexible resource blocks with various lengths of timeslots can substantially reduce the latency outage of the CoMP case to become only around $0.15$ for $\eta=1$. 


\begin{figure}[!t]
	\centering
	\subfigure[$r^{req} = 1$ b/s/Hz]{
		\includegraphics[width=3in]{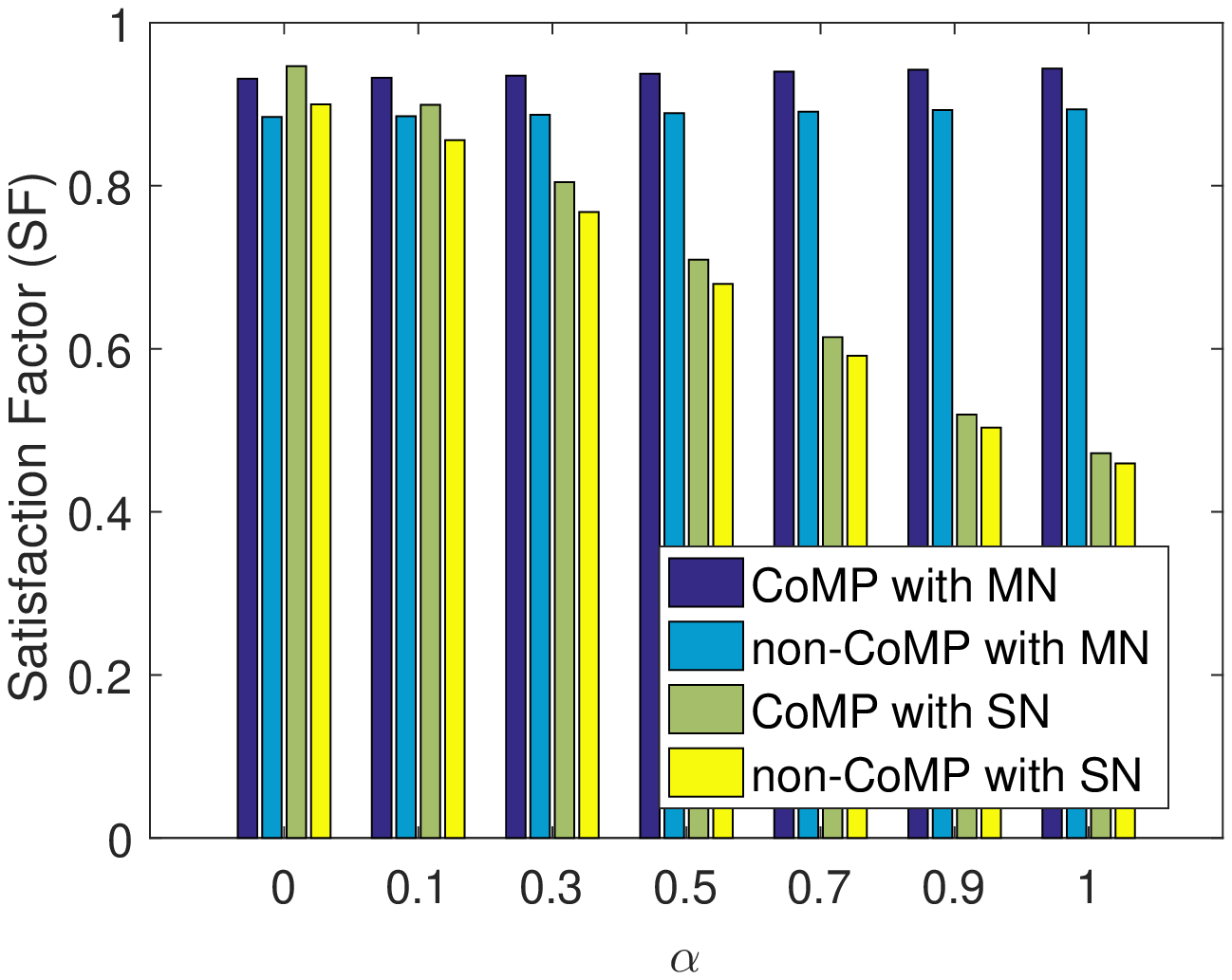}
		\label{sub1}}
	\subfigure[$r^{req} = 3$ b/s/Hz]{
		\includegraphics[width=3in]{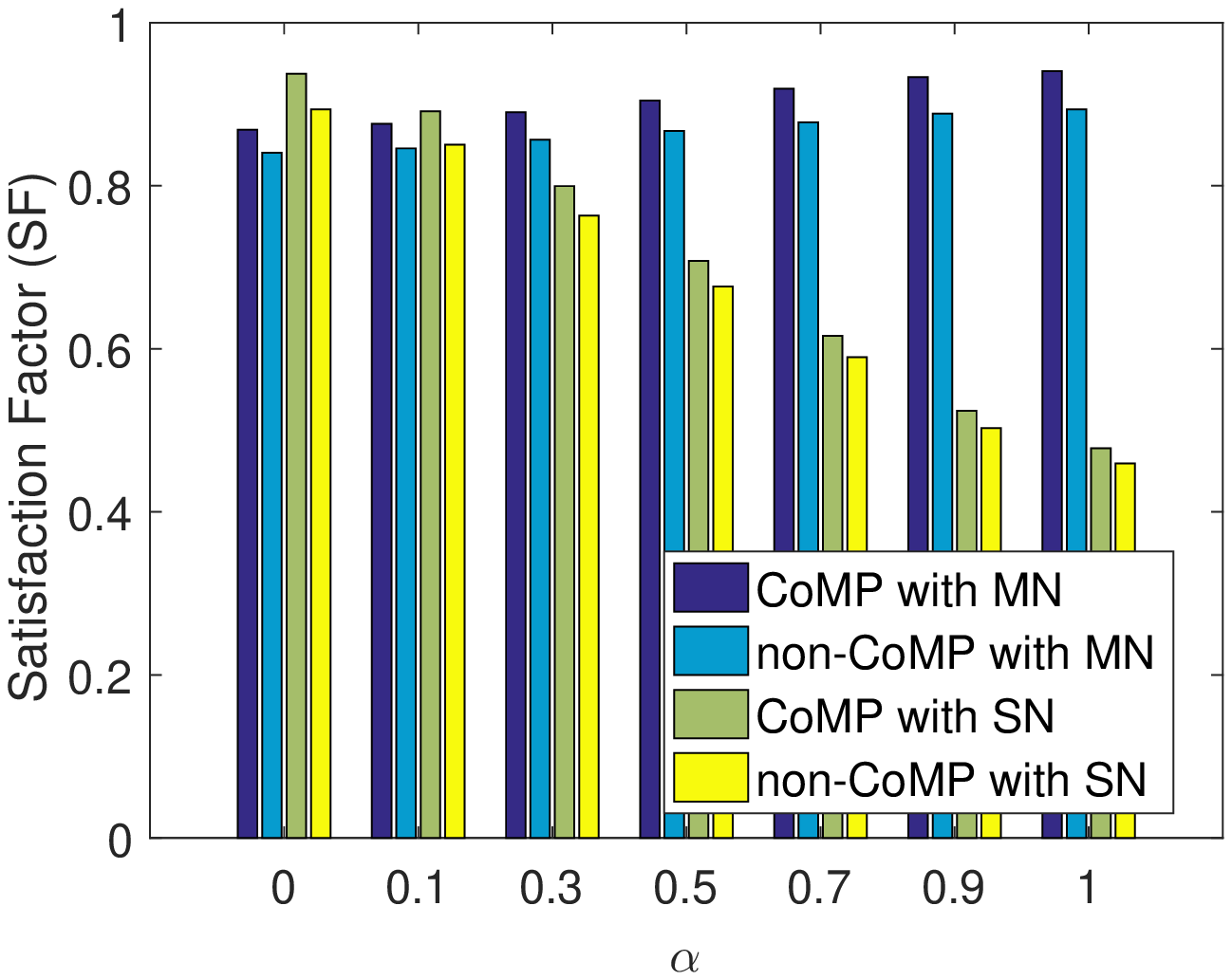}
		\label{sub2}}
	\subfigure[$r^{req} = 5$ b/s/Hz]{
		\includegraphics[width=3in]{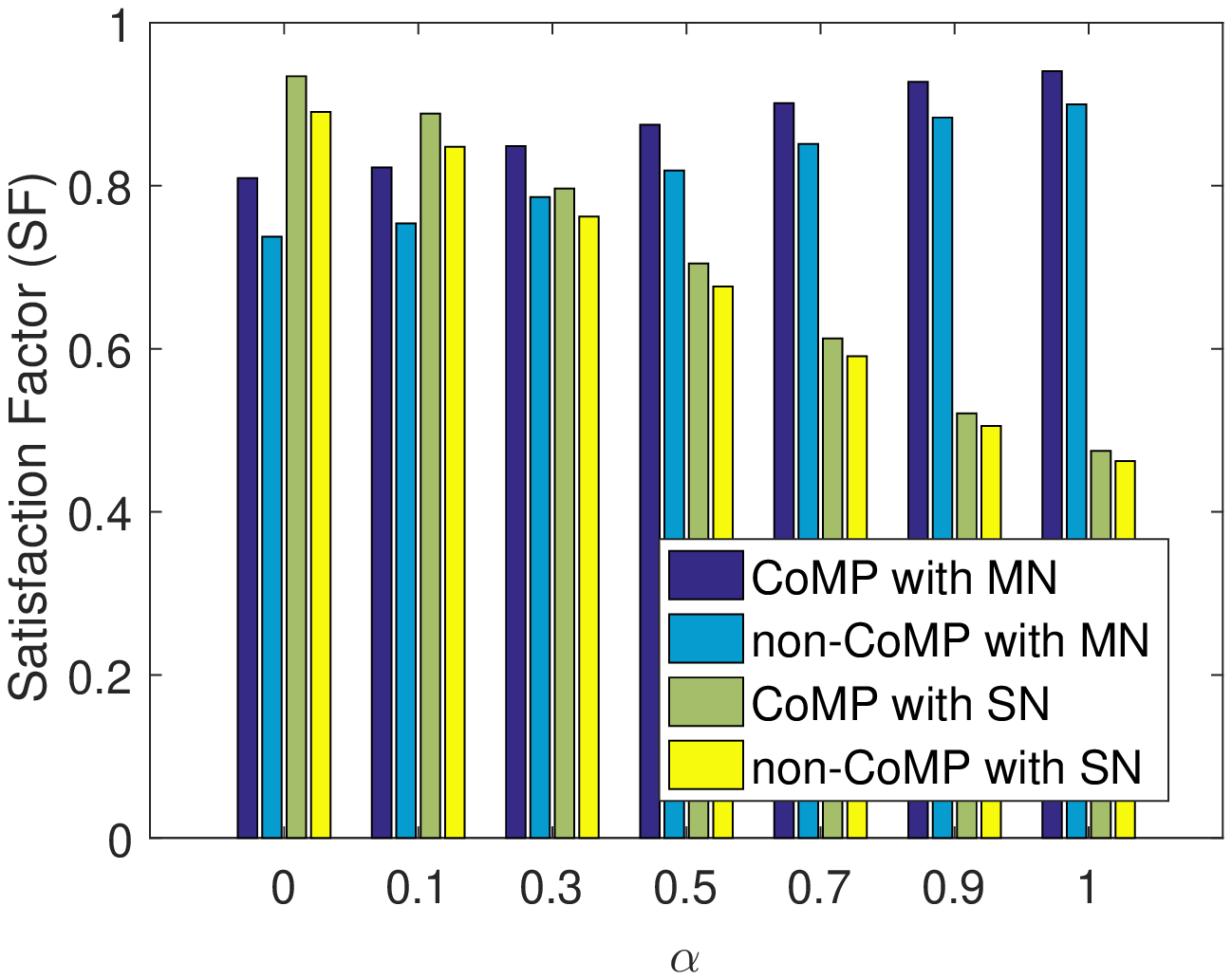}
		\label{sub3}}
	\subfigure[$r^{req} = 7$ b/s/Hz]{
		\includegraphics[width=3in]{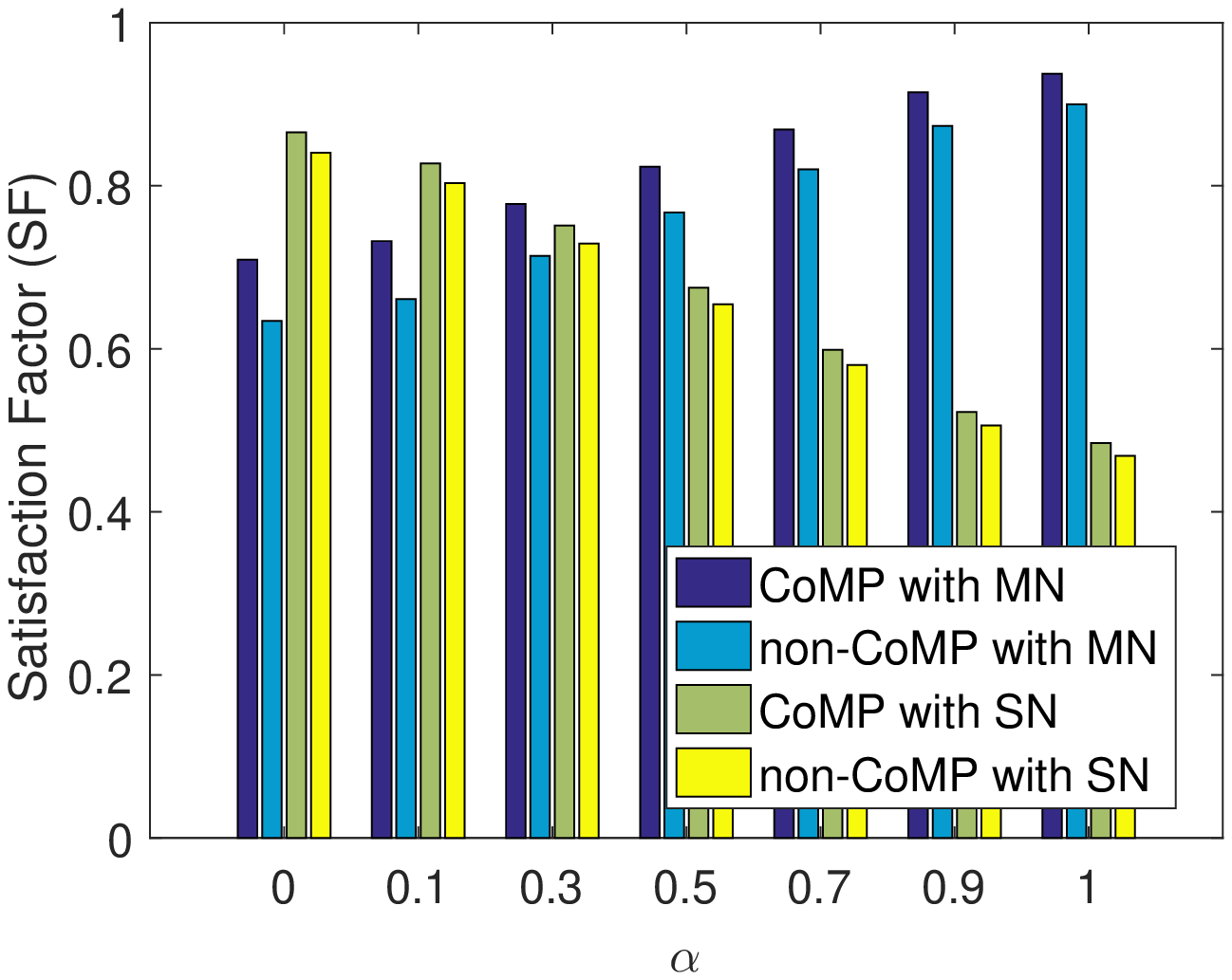}
		\label{sub4}}
	\caption{Performance of satisfaction factor of proposed CESP algorithm considering various weights of $\alpha=\{0,0.1,0.3,0.5,0.7,0.9,1\}$ under different QoS requirements $r^{req}=\{ 1,3,5,7\}$ b/s/Hz with maximum transmit power $P^{max}=23$ dBm, latency constraint $l^{req}=0.75$ ms and ratio of edge users $\eta=0.5$.}\label{Fig.8}
\end{figure}

In order to investigate the joint influence of proposed CESP algorithm in terms of QoS requirements in $\eqref{Dout}$ and latency outage in $\eqref{Lout}$, we further define the satisfaction factor (SF) as
\begin{equation}\label{SF}
SF = (1-\alpha)\cdot \Upsilon_{out} + \alpha\cdot l_{out},
\end{equation}
where $\alpha$ is a constant implying the emphasis of a system between data rate requirement and latency constraint. We can observe from $\eqref{SF}$ that $SF$ of CESP is dominated by latency constraint with increasing $\alpha$ establishing a latency-aware network, whilst it provides QoS-aware services with decrement of $\alpha$. In Fig. \ref{Fig.8}, we evaluate the satisfaction $SF$ of proposed CESP algorithm versus different weights of $\alpha$ under various QoS requirements. Firstly, we can observe from Fig. \ref{sub1} to \ref{sub4} that with increasing QoS requirements under smaller $\alpha=\{0,0.1,0.3\}$, all of the four schemes result in decreasing $SF$ due to insufficient resource assignment. Furthermore, with larger values of $\alpha=\{0.5,0.7,0.9,1\}$, MN achieves comparably higher $SF$ than that of SN since the network has a tendency to offer latency-aware services with low latency outage. Under low QoS requirements, i.e., $r^{req}=1$ b/s/Hz as depicted in Fig. \ref{sub1}, the proposed CESP for MN networks sustains stable $SF$ performance striking a compelling balance between QoS and latency owing to its flexible allocation among different numerologies. On the other hand, with $r^{req}=7$ b/s/Hz as shown in Fig. \ref{sub4}, the performance of CESP under CoMP enhanced MN increases from $SF=0.72$ to $0.93$, whilst that of SN degrades from $SF=0.85$ to $SF=0.49$ which has much lower satisfaction values compared to MN. Additionally, benefited by CoMP enhanced transmissions, higher $SF$ can be achieved in both MN and SN networks due to the mitigation of ICI. Although CoMP enhanced SN is more appropriate than MN due to the non-existence of INI for QoS-oriented network under $\alpha=0$, SN lacks the flexibility to adjust numerology lengths of resource blocks which restricts the capability to fulfill different network service requirements.

\subsection{Comparison of Proposed CESP Algorithm}

\begin{figure}[!t]
	\centering
	\includegraphics[width=3.3in]{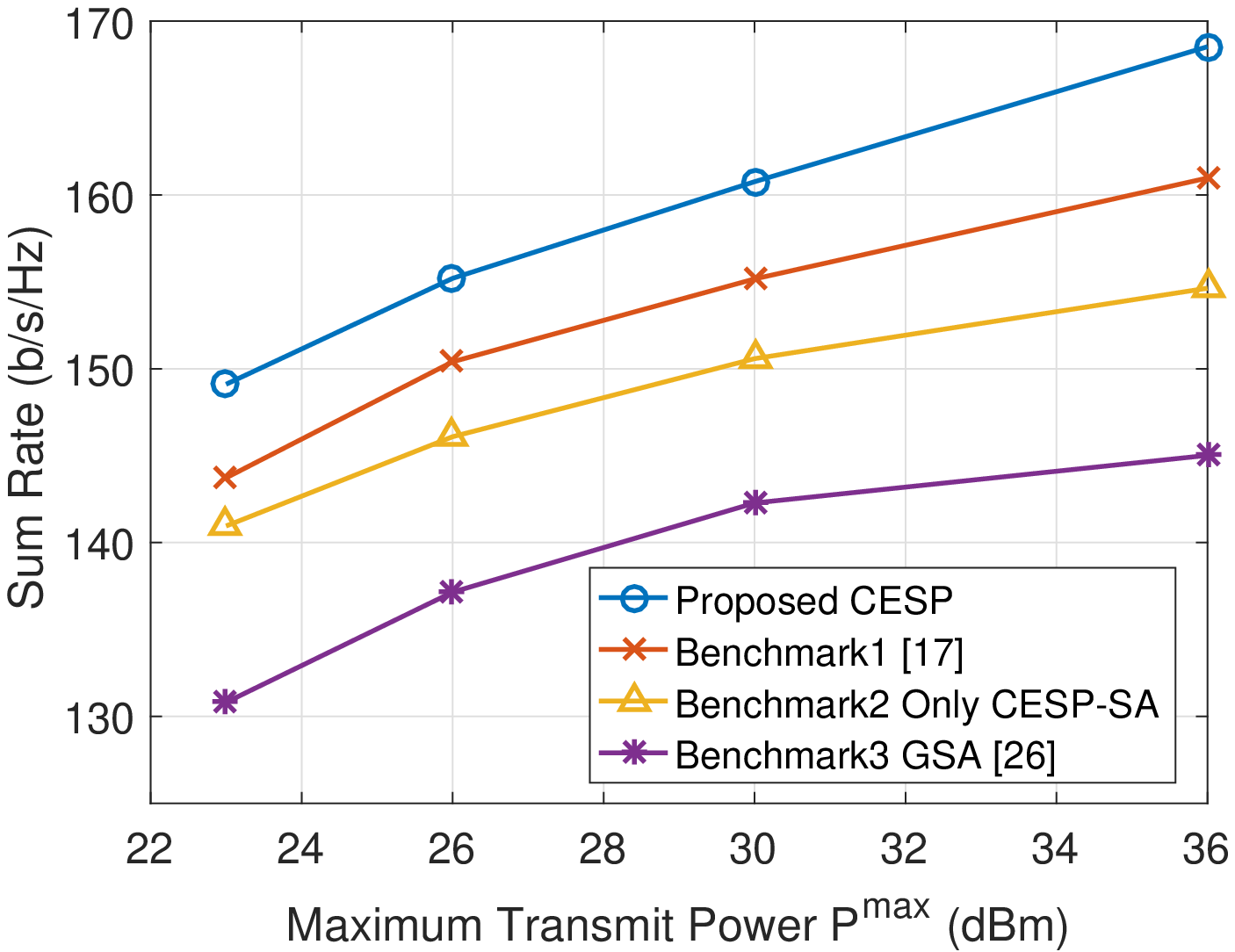}
	\caption{Comparison of proposed CESP algorithm and existing benchmarks for CoMP enhanced MNN in terms of sum rate considering maximum transmit power $P^{max}=23$ dBm, QoS requirement $r^{req}=1$ b/s/Hz, latency constraint $l^{req}=0.75$ ms and ratio of edge users $\eta=0.5$.} 
	\label{Fig.9}
\end{figure}

As demonstrated Fig. \ref{Fig.9}, we compare our proposed CESP algorithm with the existing benchmarks in open literatures as follows: \textit{Benchmark1} referring \cite{18_main_ref} adopts equal power allocation (EPA); whereas the subcarrier assignment sub-problem is resolved by integer relaxation, i.e., the discrete variable of $x^{i}_{k,m,n}$ is relaxed to a continuous range of $[0,1]$. However, constraint $\eqref{SA new_C}$ is not adopted in \cite{18_main_ref} which potentially leads to quantization errors when transforming continuous variables into discrete indexes of subcarrier assignment. \textit{Benchmark2} applies EPA along with our CESP-SA scheme in Subsection \ref{SA_sec} for subcarrier allocation problem. On the other hand, \textit{Benchmark3} employs CESP-PA in Subsection \ref{PA_sec} for power allocation; while genetic subcarrier assignment (GSA) in \cite{100} is utilized which adopts discrete genetic algorithm to allocates the optimal subcarrier indicators. We can observe from Fig. \ref{Fig.9} that higher sum rate can be achieved with higher allowable maximum transmit power which potentially provides higher degree of freedom to assign resources to mitigate both INI and ICI, i.e., the sum rate of CESP increases from $149$ to $169$ b/s/Hz from $P^{max}=23$ to $26$ dBm. Furthermore, it can be seen that the \textit{Benchmark3} with GSA displays the worst performance since the local optimum may be obtained by using the genetic algorithm. Since \textit{Benchmark2} adopts the proposed CESP-SA algorithm, it can acquire the optimal subcarrier allocation achieving higher sum rate performance than GSA. Furthermore, as for \textit{Benchmark1}, it has a degradation of around $8$ b/s/Hz compared to that of proposed CESP algorithm due to its quantization error induced from the process of integer relaxation. To summarize, under the CoMP enhanced MNN, our proposed CESP algorithm can optimally and flexibly assign subcarrier and power resources outperforming the other existing benchmarks in terms of sum rate while sustaining latency constraints.

\section{Conclusions} \label{SEC_CON}

We have conceived a CoMP enhanced MNN for 5G-NR wireless network. We have proposed a CESP scheme which aims at flexibly assigning power and subcarrier resources to maximize system sum rate constrained by QoS and latency requirement. A solvable convex optimization problem is theoretically proved and obtained with the employment of D.C. approximation, integer relaxing and Lagrangian properties. By mitigating ICI and INI effects, the simulation results have demonstrated that the proposed CESP scheme for CoMP enhanced MN transmission can perform higher sum rate and lower rate and latency outages for both uniform and edge users. It strikes a compelling balance between QoS and latency outage offering either QoS-oriented or latency-aware services. Moreover, our proposed CESP scheme can optimally and flexibly assign subcarrier and power resources, which outperforms the existing benchmarks in open literatures in terms of sum rate while sustaining latency constraints.

\bibliographystyle{IEEEtran}
\bibliography{IEEEabrv}
\end{document}